\newcommand{\mc}{\mathcal}
\newcommand{\C}{{\mathscr{C}}}
\newcommand{\Sc}{{\mathscr{S}}}
\newcommand{\Pc}{{\mathscr{P}}}
\newcommand{\F}{{\mc{F}}}
\newcommand{\X}{{\mc{X}}}
\newcommand{\ignore}[1]{}
\newcommand{\RETURNR}{\textbf{return}}
\newtheorem{obs}{Observation}
\begin{document}

\title{Shortest reconfiguration paths in the solution space of Boolean formulas}
\author
{
    Amer E. Mouawad\inst{1}\thanks{Research supported by the Natural Science and Engineering Research Council of Canada.} \and
    Naomi Nishimura\inst{1}$^{\star}$ \and
    Vinayak Pathak\inst{1} \and\\
    Venkatesh Raman\inst{2}
}
\institute
{
    David R. Cheriton School of Computer Science\\
    University of Waterloo, Ontario, Canada.\\
    \email{\{aabdomou, nishi, vpathak\}@uwaterloo.ca}
    \and
    The Institute of Mathematical Sciences\\
    Chennai, India.\\
    \email{vraman@imsc.res.in}
}
\maketitle

\begin{abstract}
Given a Boolean formula and a satisfying assignment, a
\emph{flip} is an operation that changes the value of a variable in
the assignment so that the resulting assignment remains satisfying.
We study the problem of computing the shortest sequence
of flips (if one exists) that transforms a given satisfying assignment $s$ to another
satisfying assignment $t$ of a Boolean formula. Earlier work characterized the
complexity of finding any (not necessarily the shortest) sequence of flips from one
satisfying assignment to another using Schaefer's framework for
classification of Boolean formulas. We build on it to
provide a trichotomy for the complexity of finding the
{\it shortest} sequence of flips and show that it
is either in P, NP-complete, or PSPACE-complete.

Our result adds to the small set of complexity results known
for {\it shortest reconfiguration sequence} problems by providing an example where the shortest sequence can be found
in polynomial time even though its length is not equal to the
symmetric difference of the values of the variables in $s$ and $t$.
This is in contrast to all reconfiguration
problems studied so far, where polynomial time algorithms for
computing the shortest path were known only for
cases where the path modified the symmetric difference only.
\end{abstract}

\section{Introduction}
\emph{Reconfiguration} problems study relationships between feasible
solutions to an instance of a computational problem and
have recently received significant attention~\cite{BC09,IDHPSUU11,KMM12,oursipec}.
The relationship between solutions is typically analyzed with respect to
a {\em reconfiguration step}, which specifies how one solution can be transformed into another.

For the problem of satisfiability, for example, one defines a reconfiguration
step to be a {\em flip} operation, that is, changing
the value of one variable in a satisfying assignment
such that the resulting assignment is also satisfying.
Most reconfiguration problems can be stated
concisely in terms of a graph---the
{\em reconfiguration graph}---that has a
node for each feasible solution and an undirected edge between two solutions if either
one can be formed from the other by a single reconfiguration
step. Thus for the reconfiguration of satisfiability~\cite{GKMP09},
there is a node for each satisfying assignment and an
edge whenever the {\em Hamming distance} between two assignments, i.e.
the number of variables in which the two assignments differ in value, is exactly one.

\section{Background and motivation}\label{sec:background}
\subsubsection{Reconfiguration}
In one of the earliest works on reconfiguration, Gopalan et al.~\cite{GKMP09} considered the
problem of deciding if a sequence of flips exists that can
reconfigure assignment $s$ to assignment $t$, both
satisfying a Boolean formula $\phi$; they showed that for any class of
formulas this question is either in P or is PSPACE-complete.
Since then, reconfiguration versions of various problems have been studied,
including maximum independent set, minimum vertex cover, maximum
matching, shortest path, graph colorability, and
many others~\cite{BC09,IDHPSUU11,IKD12,IKOZ12,oursipec}. Typical questions addressed in these works
include the structure or the complexity of determining
\begin{itemize}
\item
\emph{st-connectivity}: whether there is a path from $s$ to $t$
in the reconfiguration graph~\cite{BC09,IDHPSUU11,IKD12,IKOZ12} or
\item
\emph{connectivity}: whether the reconfiguration graph is connected~\cite{BB13temp,CVJ08,FHHH11} or
\item
upper bounds for the {\it diameter} of the reconfiguration graph~\cite{BJLPD11,BC09,IKD12}.
\end{itemize}

More recently, there has been interest in finding shortest paths (if one exists) as well
as in the parameterized complexity of reconfiguration problems~\cite{MNR14,oursipec}. Although some algorithms for
deciding st-connectivity also happen to compute the shortest path~\cite{IDHPSUU11}
(e.g. spanning trees, matchings), this is not the case for satisfiability of Boolean
formulas, the subject of this paper.
We study the question of computing the shortest flip sequence between
two satisfying assignments and complementing Gopalan et al.'s work, provide a partition of the set of
Boolean formulas into three equivalence classes where
the problem is in P, NP-complete, or PSPACE-complete.

Reconfiguration problems exhibit several recurring patterns. For example, most
reconfiguration versions of NP-complete decision problems are
PSPACE-complete~\cite{BC09,IDHPSUU11} (e.g. maximum independent set) whereas versions of
problems in P are in P~\cite{IDHPSUU11} (e.g. maximum matching). Known exceptions
include the shortest path and 3-coloring problems; the former is in P but has a
reconfiguration version that is PSPACE-complete~\cite{Bonsma12} and the latter is NP-complete
but has a reconfiguration version that is in P~\cite{CVJ11}.
Another recurring pattern is a connection between the st-connectivity problem (in P or
PSPACE-complete) and the diameter of the reconfiguration graph
(polynomial or exponential, respectively).

Most relevant to our work is the pattern that the only polynomial-time
algorithms known for finding the shortest reconfiguration path have the property that they make no
changes to parts of the solution common to $s$ and $t$. For
trees and cactus graphs, the shortest path between maximum independent
sets $s$ and $t$ never removes vertices in $s \cap t$~\cite{MNR14}. In the sequence
of flips for 2CNF formulas (the only class for which a polynomial-time
algorithm for shortest reconfiguration path of satisfiability was previously
known), the only variables flipped are those whose values are different in
$s$ and $t$~\cite{GKMP09}.
To the best of our knowledge, our results on computing the shortest path in a reconfiguration graph
for satisfiability provide the first exception to this pattern. In
particular, we provide a class of Boolean formulas where the shortest
reconfiguration path can flip variables that have the same values in $s$ and $t$ and yet the path can be
computed in polynomial time.  Insights from our results may lead to a better
understanding of the role of the symmetric difference in computing shortest reconfiguration paths.

\subsubsection{Flips in triangulations}
The problem of computing the shortest reconfiguration sequence has a long
history in the field of triangulations~\cite{Bern-Eppstein,Bose-Verdonschot,DRS10,Lawson-72},
although it has not been studied with this name. The reconfiguration of triangulations of a
convex polygon makes use of a \emph{flip} operation that replaces one
diagonal with another. It is known that one can always transform one
triangulation of a polygon to another~\cite{Lawson-72}; therefore, research has focused on
the complexity of finding shortest reconfiguration paths, where
results have been obtained for planar point
sets, simple polygons, convex polygons and triangulations where edges have
labels~\cite{AMP12,BLPV13,CulikWood,Pilz}. This problem is identical to reconfiguring
independent sets for a certain kind of a graph, providing an
example where although st-connectivity and connectivity are both
trivially solvable, for many cases the complexity of shortest
reconfiguration path has been open for more than 40 years~\cite{CulikWood}.

Interestingly, one distinction between the case of convex polygons, which is open,
and the case of simple polygons, which is NP-complete, is that the
former but not the latter has the property that the shortest flip
sequence never flips a diagonal shared by $s$ and $t$. This adds to the motivation
for studying reconfiguration problems where the shortest reconfiguration path
can be found in polynomial time even though the path flips objects that are already common between $s$ and $t$.

\subsubsection{Reconfiguration on Boolean formulas and Schaefer's framework}
Schaefer's~\cite{Schaefer} framework provides a way to classify Boolean formulas and was first used by Schaefer to show that for any class that can be defined using the framework, deciding whether a formula of that class has a satisfying assignment is either in P or NP-complete.

Schaefer's framework has previously been used by Gopalan et al.~\cite{GKMP09} and
Schwerdtfeger~\cite{Schwerdt} in the context of reconfiguration, where they provide a similar characterization for st-connectivity and connectivity of the reconfiguration graph, respectively. In our work, we provide a similar complete characterization for finding the shortest reconfiguration path in terms of classes for which it is in P, NP-complete, or PSPACE-complete.
In particular, our results imply that
there are classes where we can compute shortest reconfiguration paths even when the path
flips variables that have the same value in both $s$ and $t$.

\subsubsection{Shortest paths in large graphs}
A labelled hypercube in $n$ dimensions exhibits a shortest path finding algorithm that takes time logarithmic in the size of the graph---simply compute the Hamming distance between the two vertices. Partial cubes are subgraphs of the hypercube where the same property holds~\cite{DBLP:books/daglib/0019809}. In general, a \emph{distance labeling scheme}~\cite{GPRR01,P00,WP11} is an assignment of bit vectors to the vertices of a given graph such that the length of the shortest path between two vertices can be computed just from the bit vectors assigned to the two vertices. Small distance labels provide efficient shortest path algorithms for large graphs. 

Interestingly, the
reconfiguration graph of satisfying assignments of 2CNF formulas is
known to be a partial cube. One consequence of our results is the identification of a new
class of subgraphs of the hypercube (reconfiguration graphs of navigable formulas, as defined in Section~\ref{sec-prelim}) where shortest paths can be found efficiently. Our class is fundamentally more complex than partial cubes in the sense that the distance
between two vertices is not merely the Hamming distance between their labels.

\section{Computing shortest reconfiguration paths}
\subsection{Preliminaries}\label{sec-prelim}
We use terminology originally introduced by Schaefer~\cite{Schaefer}
and adapted to reconfiguration by Gopalan et al.~\cite{GKMP09} and Schwerdtfeger~\cite{Schwerdt}.

A \emph{$k$-ary Boolean logical relation} (or \emph{relation} for short)
$R$ is defined as a subset of $\{0,1\}^k$, where $k \geq 1$. Each $i\in \{1,\ldots,k\}$
can be interpreted as a variable of $R$ such that $R$ specifies exactly which
assignments of values to the variables are to be considered satisfying.

For any $k$-ary relation $R$ and positive integer $k' \leq k$, we define
a $k'$-ary \emph{restriction} of $R$ to be any $k'$-ary relation $R'$ that
can be obtained from $R$ by substitution with constants and identification of variables.
More precisely, let $X:\{1,\ldots,k\}\rightarrow\{1,\ldots , k'\}\cup\{c_0, c_1\}$ be a mapping from
the variables of $R$ to the variables of $R'$ and the constants 0 and 1.
Any such $X$ defines a mapping $f_X:\{0,1\}^{k'} \rightarrow\{0,1\}^{k}$ as follows.
For $r\in\{0,1\}^{k'}$, let $f_X(r)$ be the $k$-bit vector whose $i^{th}$ bit is 0
if $X(i) = c_0$, 1 if $X(i) = c_1$ and equal to the $X(i)^{th}$ bit of $r$ otherwise.
We say that a $k'$-ary relation $R'$ is a restriction of $R$ with respect
to $X:\{1,\ldots,k\}\rightarrow\{1,\ldots,k'\}\cup\{c_0,c_1\}$ if $r \in R' \Leftrightarrow f_X(r) \in R$.

A Boolean formula $\phi$ over a set $\{x_1,\ldots ,x_n\}$ of variables defines a
relation $R_\phi$ as follows. For any $n$-bit vector $v\in\{0,1\}^n$, we
interpret $v$ as the assignment to the variables of $\phi$ where $x_i$ is
set to be equal to the $i^{th}$ bit of $v$. We then say that $v\in R_\phi$ if and only if $v$ is a satisfying assignment.

A \emph{CNF formula} is a Boolean formula of the form $C_1 \wedge \ldots \wedge C_m$,
where each $C_i$, $1 \leq i \leq m$, is a \emph{clause} consisting of a finite disjunction of
\emph{literals} (variables or negated variables). A \emph{$k$CNF formula}, $k \geq 1$, is a CNF formula
where each clause has at most $k$ literals. A CNF formula is \emph{Horn} (\emph{dual Horn})
if each clause has at most one positive (negative) literal.

For a finite set of relations $\mc{S}$, a \emph{CNF($\mc{S}$) formula} over a set of $n$ variables
$\{x_1,\ldots,x_n\}$ is a finite collection $\{C_1,\ldots,C_m\}$ of clauses.
Each $C_i$, $1 \leq i \leq m$, is defined by a tuple $(R_i, X_i)$, where $R_i$ is
a $k_i$-ary relation in $\mc{S}$ and $X_i:\{1,\ldots,k_i\}\rightarrow\{1,\ldots,n\}\cup\{c_0,c_1\}$ is a function.
Each $X_i$ defines a mapping $f_{X_i}:\{0,1\}^n\rightarrow\{0, 1\}^{k_i}$ and we say that
an assignment $v$ to the variables satisfies $\phi$ if and only if
for all $i\in \{1,\ldots,m\}$, $f_{X_i}(v)\in R_i$. For any variable $x_j$, we say
that $x_j$ \emph{appears in} clause $C_i$ if $X_i(q) = j$ for some $q \in \{1,\ldots,k_i\}$ and for any
assignment $v$ to the variables of $\phi$, we say that $f_{X_i}(v)$ is the assignment induced by $v$ on $R_i$.

For example, to represent the class 3CNF in Schaefer's framework, we specify $\mc{S}$ as
follows. Let $R^0 = \{0,1\}^3\backslash\{000\}$, $R^1 =
\{0,1\}^3\backslash\{100\}$, $R^2 = \{0,1\}^3\backslash\{110\}$,
$R^3 = \{0,1\}^3\backslash\{111\}$, and $\mc{S} = \{R^0, R^1,
R^2, R^3\}$. Since $R^i$ can be used to represent all 3-clauses with
exactly $i$ negative literals (regardless of the positions in which they appear in a clause),
clearly CNF($\mc{S}$) is exactly the class of 3CNF formulas.

Below we define some classes of relations used in the literature and
relevant to our work. Note that componentwise bijunctive, OR-free and NAND-free
were first defined by Gopalan et al.~\cite{GKMP09}. Schwerdtfeger~\cite{Schwerdt}
later modified them slightly and defined safely component-wise bijunctive, safely
OR-free and safely NAND-free. We reuse the names componentwise bijunctive, OR-free
and NAND-free for Schwerdtfeger's safely component-wise bijunctive, safely OR-free and safely NAND-free respectively.

\begin{definition}\label{def-relations}
For a $k$-ary relation $R$:
\begin{itemize}
\item $R$ is \emph{bijunctive} if it is the set of satisfying assignments of a 2CNF formula.
\item $R$ is \emph{Horn} (\emph{dual Horn}) if it is the set of satisfying assignments of a Horn (dual Horn) formula.
\item $R$ is \emph{affine} if it is the set of satisfying assignments of a formula $x_{i_1} \oplus \ldots \oplus x_{i_h} \oplus c$,
with $i_1,\ldots,i_h \in \{1,\ldots,k\}$ and $c \in \{0,1\}$. Here $\oplus$ denote the {\it exclusive OR} operation
which evaluates to $1$ when {\it exactly} one of the values it operates on is $1$ and evaluates to $0$ otherwise.
\item $R$ is \emph{componentwise bijunctive} if every connected component of the
reconfiguration graph of $R$ and of the reconfiguration graph of every restriction $R'$ of $R$ induces a bijunctive relation.
\item $R$ is \emph{OR-free} (\emph{NAND-free}) if there does not exist a restriction $R'$ of $R$ such that $R' = \{01, 10, 11\}$
($R' = \{01, 10, 00\}$).
\end{itemize}
\end{definition}

Using his framework, Schaefer showed that SAT($\mc{S}$)---the problem of deciding if a
CNF($\mc{S}$) formula has a satisfying assignment---is in P if every relation in $\mc{S}$ is bijunctive,
Horn, dual Horn, or affine, and is NP-complete otherwise.
The result is remarkable because it divides a large set of problems
into two equivalence classes based on their computational
complexity, which is the opposite of what one might
expect due to Ladner's theorem~\cite{Arora:2009:CCM:1540612}.

Since Schaefer's original paper, a myriad of problems about Boolean
formulas have been analyzed, and similar divisions into equivalence
classes obtained~\cite{creignou2001complexity}. Gopalan et al.'s work~\cite{GKMP09}, with corrections presented
by Schwerdtfeger~\cite{Schwerdt}, shows a dichotomy for the problem of deciding
whether a reconfiguration path exists between two satisfying assignments
of a CNF($\mc{S}$) formula.

They call a set $\mc{S}$ of relations \emph{tight} if
\begin{itemize}
\item
all relations in $\mc{S}$ are componentwise bijunctive, or
\item
all relations in $\mc{S}$ are OR-free, or
\item
all relations in $\mc{S}$ are NAND-free.
\end{itemize}
They showed that the st-connectivity
problem on CNF($\mc{S}$) formulas is in P if $\mc{S}$ is tight and PSPACE-complete otherwise.

Our trichotomy relies on a new class of formulas that subdivides the tight classes
into those for which computing the shortest reconfiguration path
can be done in polynomial time and those for which it is NP-complete.

\begin{definition}
For a $k$-ary relation $R$:
\begin{itemize}
\item $R$ is \emph{Horn-free} if there does not exist a restriction $R'$ of $R$
such that $R' = \{0,1\}^3 \setminus \{011\}$,
or equivalently, $R'$ is the set of all satisfying assignments
of the clause $(x \vee \overline{y} \vee \overline{z})$ for some three variables $x$, $y$, and $z$.
\item $R$ is \emph{dual-Horn-free} if there does not exist a restriction $R'$ of $R$
such that $R' = \{0,1\}^3 \setminus \{100\}$,
or equivalently, $R'$ is the set of all satisfying assignments of
the clause $(\overline{x} \vee y \vee z)$ for some three variables $x$, $y$, and $z$.
\end{itemize}
\end{definition}

The following is a useful observation.

\begin{obs}
For $k \geq 3$ and $R$ a $k$-ary relation, if $R$ is OR-free then it is dual-Horn-free.
Similarly, if $R$ is NAND-free then it is Horn-free.
\end{obs}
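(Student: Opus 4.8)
The plan is to prove each implication by contraposition, after first recording a transitivity property of restrictions: if $R''$ is a restriction of $R'$ and $R'$ is a restriction of $R$, then $R''$ is a restriction of $R$. This is routine — if $R'$ is obtained from $R$ via the map $X:\{1,\ldots,k\}\to\{1,\ldots,k'\}\cup\{c_0,c_1\}$ and $R''$ from $R'$ via $Y:\{1,\ldots,k'\}\to\{1,\ldots,k''\}\cup\{c_0,c_1\}$, define $Z:\{1,\ldots,k\}\to\{1,\ldots,k''\}\cup\{c_0,c_1\}$ by $Z(i)=Y(X(i))$ when $X(i)$ is a variable and $Z(i)=X(i)$ when $X(i)\in\{c_0,c_1\}$; one checks directly that $f_Z=f_X\circ f_Y$, whence $r\in R''\Leftrightarrow f_Y(r)\in R'\Leftrightarrow f_X(f_Y(r))\in R\Leftrightarrow f_Z(r)\in R$, so $R''$ is the restriction of $R$ with respect to $Z$.

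With this in hand, for the first statement I would prove the contrapositive: if $R$ is not dual-Horn-free, then $R$ is not OR-free. By hypothesis $R$ has a restriction $R'$ equal to $\{0,1\}^3\setminus\{100\}$, the set of satisfying assignments of $(\overline{x}\vee y\vee z)$. Now restrict $R'$ by substituting the constant $1$ for $x$ while keeping $y$ and $z$ as the two variables; since $(\overline{x}\vee y\vee z)$ becomes $(y\vee z)$, the resulting $2$-ary relation is exactly $\{01,10,11\}$. By the transitivity property this relation is a restriction of $R$, so $R$ is not OR-free. The second statement is handled symmetrically: if $R$ is not Horn-free it has a restriction $R'$ equal to $\{0,1\}^3\setminus\{011\}$, the satisfying assignments of $(x\vee\overline{y}\vee\overline{z})$; substituting the constant $0$ for $x$ turns this clause into $(\overline{y}\vee\overline{z})$ and hence yields the relation $\{00,01,10\}$, which is the NAND relation, so $R$ is not NAND-free.

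I do not expect any genuinely difficult step. The only places requiring attention are the map-composition bookkeeping behind the transitivity property and the verification that substituting a constant for the distinguished variable collapses the $3$-clause to precisely the binary OR (respectively NAND) relation, with the other two variables neither identified with each other nor reordered. The hypothesis $k\ge 3$ enters only to ensure that $3$-ary restrictions — and thus the relations figuring in the definitions of Horn-free and dual-Horn-free — can exist in the first place.
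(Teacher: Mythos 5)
Your proof is correct and takes essentially the same route as the paper: by contraposition, exhibit the binary OR (resp.\ NAND) relation as a further restriction of the ternary dual-Horn (resp.\ Horn) relation via a constant substitution, using the (routine) transitivity of restrictions, which you helpfully make explicit. You even get the substitution right where the paper's wording slips: to extract $\{01,10,11\}$ from $\{0,1\}^3\setminus\{100\}$ one must set the negated variable to $1$ (the paper says ``to $0$'', which only works in the Horn/NAND case).
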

\begin{proof}
Assume that $R$ is OR-free but not dual-Horn-free.
Then there exists a restriction $R'$ of $R$
such that $R' = \{0,1\}^3 \setminus \{100\}$. It is easy to see that,
from $R'$, one can obtain $R'' = \{01, 10, 11\}$ by setting one of
the three variables in $R'$ to $0$, resulting in a contradiction.
A similar proof shows that NAND-free relations are Horn-free.
\qed
\end{proof}

\begin{definition}
\label{def-tree-like}
We call a set $\mc{S}$ of relations \emph{navigable} if one of the following holds:
\begin{itemize}
\item[(1)] All relations in $\mc{S}$ are OR-free and Horn-free.
\item[(2)] All relations in $\mc{S}$ are NAND-free and dual-Horn-free.
\item[(3)] All relations in $\mc{S}$ are component-wise bijunctive.
\end{itemize}
\end{definition}

It is clear that if $\mc{S}$ is navigable, then it is also tight. Our main result is the following trichotomy.
\begin{theorem}
\label{thm:main}
For a CNF($\mathcal{S}$) formula $\phi$ and two
satisfying assignments $s$ and $t$, the problem of
computing the shortest reconfiguration path between
$s$ and $t$ is in P if $\mc{S}$ is navigable, NP-complete if $\mc{S}$ is
tight but not navigable and PSPACE-complete otherwise.
\end{theorem}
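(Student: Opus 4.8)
I would prove the trichotomy by treating the three cases separately, interpreting ``computing the shortest path'' through its decision version ``is $d(s,t)\le \ell$?'' (which in the navigable case is polynomially equivalent to producing a witnessing path), and reserving essentially all of the work for the polynomial-time case.

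\textbf{PSPACE-complete when $\mc{S}$ is not tight.} Hardness is immediate: an algorithm that outputs a shortest path, or reports that none exists, decides st-connectivity, which Gopalan et al.\ proved is PSPACE-complete for non-tight $\mc{S}$. For membership, the reconfiguration graph has at most $2^n$ nodes, adjacency is polynomial-time checkable, and any finite distance is at most $2^n$; hence the predicate ``there is a walk of length at most $\ell$ from $s$ to $t$'' is evaluable in polynomial space by the standard Savitch-style recursion ($\mathrm{NPSPACE}=\mathrm{PSPACE}$), so $d(s,t)$ is computable in polynomial space.

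\textbf{NP-complete when $\mc{S}$ is tight but not navigable.} For membership I would invoke the structural results of Gopalan et al.\ (with Schwerdtfeger's corrections): when $\mc{S}$ is tight, every connected component of the reconfiguration graph has polynomial diameter, so a shortest $s$--$t$ path, if one exists, has polynomial length and can be guessed and verified in polynomial time. For hardness, first unwind the hypothesis: a tight but not navigable $\mc{S}$ cannot be componentwise bijunctive (that is navigability clause~(3)), so $\mc{S}$ is all OR-free or all NAND-free; since navigability clauses~(1),~(2) fail, some relation is not Horn-free (resp.\ not dual-Horn-free), and by the Observation every OR-free relation of arity $\ge 3$ is already dual-Horn-free, so in the OR-free subcase we have a class all of whose relations are OR-free and dual-Horn-free, with one relation that restricts to the Horn clause $(x\vee\overline y\vee\overline z)$ (dually for NAND-free). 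I would then reduce an NP-complete problem (a covering/ordering problem) to the shortest-path decision problem for such a class, using the Horn-clause restriction as an ``implication with a disjunctive antecedent'' gadget and encoding the source instance so that a reconfiguration path of a prescribed length exists iff it is a yes-instance---the disjunctive antecedent being precisely what turns the underlying scheduling of flips from a plain topological sort into an NP-hard choice. Engineering this reduction (and its dual) so that the path length exactly tracks the optimum is the delicate part of this case.

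\textbf{Polynomial time when $\mc{S}$ is navigable.} Use Gopalan et al.'s polynomial algorithm for st-connectivity on tight $\mc{S}$ to decide whether $s$ and $t$ lie in a common component, reporting ``no path'' otherwise; then split on the type of navigability. If $\mc{S}$ is componentwise bijunctive, the component of $s$ induces a bijunctive relation, i.e.\ the satisfying set of a $2$CNF formula, whose reconfiguration graph is a partial cube; so the shortest path has length exactly $|s\triangle t|$ and flips only differing variables, recovering the known $2$CNF behaviour. The substantive new work is the OR-free-and-Horn-free case (and its dual): here I would use the structural results of Gopalan et al.\ and Schwerdtfeger that an OR-free instance can be ``navigated'' toward a canonical vertex of each component, and show that Horn-freeness forces every shortest path to decompose into the $|s\triangle t|$ flips of differing variables plus a minimum set of ``auxiliary'' variables each flipped up and then back down, where this auxiliary set is computable in polynomial time by an auxiliary-graph computation (e.g.\ shortest paths or $2$-satisfiability on a dependency graph whose cascading is kept bounded precisely by the Horn-free condition). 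Correctness needs a matching pair of bounds: a lower bound that any path performs at least this many flips (a potential/cut argument over the variables) and an upper bound exhibiting an explicit path of that length from the canonical navigation. Pinning down the exact structural form of a shortest reconfiguration path in this case, and with it the lower bound, is the main obstacle of the whole proof.
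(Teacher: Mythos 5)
Your overall architecture matches the paper's (PSPACE-completeness from Gopalan et al.\ plus Savitch; NP membership from the polynomial diameter of tight classes; the case split of navigability, including the correct use of the Observation to reduce to ``OR-free and Horn-free'' versus ``NAND-free and dual-Horn-free''), but in both places where the real work lies you have a plan rather than a proof, and you say so yourself. For NP-hardness you gesture at a reduction from ``a covering/ordering problem'' using the non-(dual-)Horn-free restriction as a gadget; this is the right instinct, but the construction is the content. The paper does it concretely: from \textsc{Vertex Cover} (resp.\ \textsc{Independent Set}), with variables $x_v$ and $y_e,z_e$ and clauses $(y_e\vee\overline{z_e}\vee x_u)$, $(z_e\vee\overline{y_e}\vee x_v)$, so that $y_e,z_e$ can only be raised after one endpoint variable is raised, making the shortest $s$--$t$ distance exactly $2|E|+2k$ where $k$ is the minimum number of auxiliary ``cover'' flips. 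Until you exhibit such a gadget and verify the length accounting in both directions, this case is open in your write-up.

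The larger gap is the polynomial-time case. Your structural guess is partly right: by NAND-freeness every valid sequence can be canonicalized (all positive flips before all negative ones, Gopalan et al.), so a shortest path consists of the $|s\triangle t|$ forced flips plus some set of common-zero variables flipped up and later back down, and one must minimize that auxiliary set. But your proposed mechanism --- ``shortest paths or $2$-satisfiability on a dependency graph whose cascading is kept bounded by the Horn-free condition'' --- is exactly the step that needs a theorem, and it is not obviously true as stated. Validity of a set of positive flips is order-sensitive and is constrained jointly by all clauses; the paper's actual argument is that for NAND-free \emph{and} dual-Horn-free relations the family of valid positive flip sets at $s$ is closed under union (easy, from NAND-freeness) and, crucially, under intersection (Lemmas~\ref{lemma-tree-like}--\ref{lemma:intersection}, which is where dual-Horn-freeness enters), hence forms a distributive lattice; Birkhoff's theorem then yields a partial order $\prec_s$ such that a positive sequence is valid iff it is a downward-closed set ordered consistently with $\prec_s$ (Lemmas~\ref{lemma-partial-order} and~\ref{lemma-formula-po}). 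The algorithm then takes the \emph{smallest lower set} of the forced flips on the $s$-side and on the $t$-side and, importantly, \emph{recurses}: the auxiliary variables raised on one side create new disagreements with the other side, so a single one-shot auxiliary-set computation of the kind you describe does not suffice; the cascading interaction between the two sides is handled by the recursion, and optimality follows because any shortest canonical path must contain these lower sets. Note also that the failure of precisely this lattice/partial-order structure is what makes the tight-but-not-navigable case NP-hard, so any ``simple dependency graph'' argument must use dual-Horn-freeness (equivalently Horn-freeness in the dual case) in an essential way --- which your sketch never does. As it stands, the central lower bound and the algorithm's correctness are missing.
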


In the next section, we establish the hardness results; the rest of the paper
is devoted to develop our polynomial time algorithm for navigable formulas.
Interestingly, unlike previous classification results, while the NP-completeness
result in our case turns out to be easier, the polynomial time algorithm is quite involved.

\subsection{The hard cases}
Gopalan et al.~\cite{GKMP09} showed that if $\mc{S}$ is not tight, then
st-connectivity is PSPACE-complete for CNF($\mc{S}$). This implies that
finding the shortest reconfiguration path is also PSPACE-complete for such classes of formulas.

\begin{theorem}
If $\mc{S}$ is tight but not navigable, then finding the shortest reconfiguration
path on CNF($\mc{S}$) formulas is NP-complete.
\end{theorem}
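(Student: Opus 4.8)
My plan is to treat membership and hardness separately, and to use the $0\leftrightarrow 1$ symmetry to cut the hardness down to a single case.

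For membership in $\mathrm{NP}$ I would appeal directly to tightness. Gopalan et al.~\cite{GKMP09} show that when $\mc{S}$ is tight the reconfiguration graph of any CNF($\mc{S}$) formula has diameter polynomial in the number of variables $n$ (this bound is exactly what lets them decide st-connectivity in $\mathrm{P}$). Hence, given $\phi$, satisfying assignments $s,t$, and a bound $\ell$, a nondeterministic algorithm first checks in polynomial time that $s$ and $t$ lie in the same connected component, and then, writing $L=\min(\ell,\mathrm{poly}(n))$, guesses a sequence of at most $L$ assignments and verifies that it is a valid flip sequence from $s$ to $t$. So the decision version of the problem is in $\mathrm{NP}$.

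For hardness, observe that the $0\leftrightarrow 1$ symmetry exchanges OR-free with NAND-free and Horn-free with dual-Horn-free, while fixing the class of componentwise bijunctive relations. A tight set all of whose relations are componentwise bijunctive is navigable, so a tight but non-navigable $\mc{S}$ is, up to this symmetry, one in which every relation is OR-free and some relation $R$ is \emph{not} Horn-free. Then $R$ has a restriction equal to $\{0,1\}^3\setminus\{011\}$, the solution set of $(x\vee\overline{y}\vee\overline{z})$; by substituting constants and repeating variables in the three slots, a CNF($\mc{S}$) formula may therefore contain clauses that behave exactly like $(x\vee\overline{y}\vee\overline{z})$ and, in particular, like any implication $y\to x$. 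I would reduce \textsc{Vertex Cover} to the shortest-path problem using only this power. From $G=(V,E)$ and $k$, build the formula with a variable $x_v$ for each $v\in V$ and three control variables $M,N,W$, with clauses $(M\vee\overline{x_u}\vee\overline{x_v})$ for every edge $\{u,v\}\in E$, together with the two clauses $(N\vee\overline{M}\vee\overline{W})$ and $(W\vee\overline{M}\vee\overline{N})$; let $s$ set every variable to $1$, and let $t$ agree with $s$ except that $t(N)=t(W)=0$. Both are satisfying. Given a vertex cover $C$, one checks that the flip sequence ``turn every $x_v$ with $v\in C$ off, then $M$ off, then $N$ off, then $W$ off, then $M$ on, then every $x_v$ with $v\in C$ on again'' is legal at every step, so a reconfiguration path of length $2|C|+4$ exists; taking $C$ of minimum size $\tau(G)$ yields a path of length $2\tau(G)+4$.

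The heart of the proof, and the part I expect to require the most care, is the matching lower bound: every flip sequence from $s$ to $t$ has length at least $2\tau(G)+4$. Two facts drive this. First, whenever $M=1$ the two control clauses collapse to $N=W$, so flipping $N$ or $W$ from a satisfying assignment with $M=1$ produces an unsatisfying one; since $N$ must end at $0$, every sequence must reach some configuration with $M=0$, hence flips $M$ at least twice (it is $1$ in both $s$ and $t$) and flips each of $N$ and $W$ at least once. Second, at any moment $\theta$ with $M=0$ the edge clauses force $\{v : x_v=0\}$ to be a vertex cover of $G$, so at least $\tau(G)$ of the $x_v$ are $0$ at $\theta$; each such $x_v$ is $1$ in $s$, $0$ at $\theta$, and $1$ in $t$, hence flipped at least twice. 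Summing the contributions of the disjoint variable sets $\{M\}$, $\{N\}$, $\{W\}$, and these $x_v$'s gives length $\ge 2\tau(G)+4$. Combining the two directions, the shortest reconfiguration path has length exactly $2\tau(G)+4$, so it is at most $2k+4$ if and only if $G$ has a vertex cover of size at most $k$, which establishes $\mathrm{NP}$-hardness. The routine remaining work is to verify each clause at each step of the forward construction and to state precisely the ``$M=1$ freezes $N$ and $W$'' argument inside the reconfiguration graph.
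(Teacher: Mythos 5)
Your proof is correct, and it shares the paper's overall skeleton---membership in NP via the polynomial diameter bound for tight sets, and NP-hardness by a reduction from \textsc{Vertex Cover} that uses the excluded clause $(x \vee \overline{y} \vee \overline{z})$ (obtained as a restriction of the non-Horn-free relation) to force flipping variables on which $s$ and $t$ agree---but the gadget and the counting are genuinely different. The paper introduces a pair of auxiliary variables $y_e, z_e$ \emph{per edge}, with clauses $(y_e \vee \overline{z_e} \vee x_u)$ and $(z_e \vee \overline{y_e} \vee x_v)$, and its lower bound tracks which vertex variables are \emph{ever} flipped (that set must be a vertex cover), yielding shortest length $2|E| + 2\tau(G)$; it also writes out the NAND-free and OR-free cases separately, via \textsc{Vertex Cover} and \textsc{Independent Set}. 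You use only three global control variables $M, N, W$, show that $N$ (and $W$) can be flipped only while $M=0$, and read a vertex cover off the zero-set of the vertex variables at that single snapshot, giving shortest length $2\tau(G)+4$; the second case you dispatch by the $0\leftrightarrow 1$ complementation symmetry rather than a second explicit reduction. Your version buys a smaller formula and a cleaner lower bound (one snapshot instead of an ``ever flipped'' argument), at the price of having to verify the control-clause interplay ($M=1$ forces $N=W$) and of leaving the symmetry transfer and the realization of the three-literal clause as a CNF($\mc{S}$) clause (constants plus identification applied to the relation that is not Horn-free) as standard bookkeeping---the paper is equally terse on that last point, so this is not a gap. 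Both the step-by-step validity of your forward sequence and the $2\tau(G)+4$ lower bound check out, so the reduction is sound.
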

\begin{proof}
The problem is in NP because the diameter of the reconfiguration graph is
polynomial for all tight formulas, as shown by Gopalan et al.~\cite{GKMP09}.
We now prove that it is, in fact, NP-complete.

As $\mc{S}$ is tight but not navigable, all relations in $\mc{S}$ are OR-free or all relations in $\mc{S}$ are NAND-free.
Let us assume that all relations in $\mc{S}$ are NAND-free (we handle the other case later).
Then, as $\mc{S}$ is not navigable, there exists a relation which is dual-Horn.

We show a reduction from \textsc{Vertex Cover} to such a CNF($\mc{S}$) formula.
Given an instance $(G = (V, E), k)$
of \textsc{Vertex Cover}, we create a variable $x_v$ for each $v \in V$. For
each edge $e=(u,v) \in E$, we create two new variables $y_e$ and $z_e$ and the
clauses $(y_e \vee \overline{z_e} \vee x_{u})$ and $(z_e \vee \overline{y_e} \vee
x_{v})$. The resulting formula $F(G)$ has $|V|+2|E|$ variables and $2|E|$
clauses.

It is easy to see that all the relations of $F(G)$ are NAND-free (as we cannot set
the values of all but two of their variables to get a NAND relation), however none of them is
dual-Horn-free (as each clause has two positive literals).
Hence the formula $F(G)$ is tight but not navigable.

Let $s$ be the satisfying assignment for the formula with all variables set to $0$, and let $t$
be the satisfying assignment with all the variables $x_v , v \in V$ set to $0$ and the rest set to $1$.
If $G$ has a vertex cover $S$ of size at most $k$, then we can form a
reconfiguration sequence of length at most $2|E|+2k$ from $s$ to $t$
by flipping each $x_v, v \in S$ from $0$ to $1$, flipping  the $y_e$
and $z_e$ variables, and then flipping each $x_v, v \in S$ back from $1$ to
$0$.  To show that such a reconfiguration sequence exists only if there exists such a
vertex cover, we observe that if neither $x_u$ nor $x_v$ has
been flipped to 1, neither $y_e$ nor $z_e$ can be flipped to 1 while keeping the formula satisfied at the intermediate steps.

To show hardness when all relations in $\mc{S}$ are OR-free but not Horn-free, we give a reduction
from Independent set. Given $G=(V,E)$ and an integer $k$, we create, as before, a
variable $x_v$ for each $v \in V$ and two variables $y_e$ and $z_e$ for each $e \in E$.
For each edge $e=(u,v) \in E$, we create the clauses
$(y_e \vee \overline{z_e} \vee \overline{x_{u}})$ and $(\overline{y_e} \vee z_e \vee \overline{x_v})$.
Clearly, all the relations of the formula are OR-free, and none of them is Horn-free.

We let $s$ be the satisfying assignment that sets all the variables to $1$, and $t$ be
the satisfying assignment that sets all the variables to $0$ except the variables $x_v v \in V$ that are set to $1$.
If $G$ has an independent set of size at least $k$, then it has a vertex cover $S$ of size at most $n-k$, then we can form a
reconfiguration sequence of length at most $2|E|+2(n-k)$ from $s$ to $t$
by flipping each $x_v, v \in S$ from $1$ to $0$, flipping  the $y_e$
and $z_e$ variables, and then flipping each $x_v, v \in S$ back from $0$ to
$1$.  To show that such a reconfiguration sequence exists only if there exists such a
vertex cover (of size $n-k$), we observe that if neither $x_u$ nor $x_v$ has
been flipped to 0, neither $y_e$ nor $z_e$ can be flipped to 0 while keeping the formula satisfied at the intermediate steps.
\qed
\end{proof}

\subsection{The polynomial-time algorithm for navigable formulas}
In this section, we give the polynomial time algorithm to find the shortest
reconfiguration sequence between two satisfying assignments of a navigable formula.

Gopalan et al. gave a polynomial-time algorithm for finding the shortest
reconfiguration path in component-wise bijunctive formulas. The path, in this case, flips
only variables that have different values in $s$ and $t$. The NP-completeness proof
from the previous section crucially relies on the fact that we need to
flip variables with common values; in fact, the hardness lies in deciding precisely which
common variables need to be flipped. Thus it is tempting to conjecture
that hardness for shortest reconfiguration path is caused by
relations where the shortest distance is not always equal to the Hamming distance.

Interestingly, this intuition is wrong. The reconfiguration graph for
the relation $R=\{000, 001, 101, 111, 110\}$ is a path of length four,
where for 000 and 110 the shortest path is of length four but the
Hamming distance is two. However, we can
find shortest reconfiguration paths in formulas built out of $R$ in
polynomial time, the exact reason for which will become clear in our general description of the algorithm. The
intuitive reason is that there are very few candidates for shortest
paths; if we restrict our attention to a single clause built out of $R$, then
there exists a unique path to follow. It then suffices to determine whether there exist
two clauses for which the prescribed paths are in conflict. In general, our proof
relies on showing that even if there does not exist a unique path, the set of all possible
paths between two satisfying assignments of a navigable formula is not diverse enough to
make the problem computationally hard. We show that the set of all possible
paths can be characterized using a partial order on the set of flips.

\subsubsection{Notation}
Our results make use of two different views of the problem
(graph theoretic and algebraic), and hence two sets of notation.

The graph-theoretic view consists of the reconfiguration graph $G_R$
that has a node for each Boolean string $s\in R$ and an edge whenever the
Hamming distance between the two strings is exactly one. We call a
path from $s$ to $t$ \emph{monotonically increasing} if the Hamming
weights of the vertices on the path increase monotonically as we
go from $s$ to $t$, and define a \emph{monotonically decreasing}
path similarly. A path is \emph{canonical} if it consists of a
monotonically increasing path followed by a monotonically decreasing path.

The algebraic view consists of a \emph{token system}~\cite{DBLP:books/daglib/0019809}
consisting of a set $\mathscr{S}$ of states and a set $\tau$ of
tokens. The tokens specify the rules of transition between states.
Each token $t\in\tau$ is a function that maps $\mathscr{S}$ to
itself. Given a $k$-ary relation $R$, we define a token system as
follows. The set $\mathscr{S}$ of states consists of all the elements
of $R$ and a special state $s^*$ called the \emph{invalid state} that captures all the unsatisfying assignments of the formula. The
set $\tau$ of tokens is the set $\{x_1^+, \ldots , x_k^+\}\cup
\{x_1^-, \ldots , x_k^-\}$, where $x_i^+$ denotes a flip of variable
$x_i$ from 0 to 1, which we call a \emph{positive flip}, and denote the sign of the flip as positive,
and $x_i^-$ denotes a flip of
variable $x_i$ from 1 to 0, which we call a \emph{negative flip} and denote the sign of the flip as negative.

To complete the description of the token system, we need to specify
the function to which each token corresponds. For $x_i^+\in\tau$ and
$s\in\mathscr{S}$, $x_i^+(s^*) = s^*$, $x_i^+(s) = s^*$ if the value of
variable $x_i$ in $s$ is 1, $x_i^+(s) = s'$ if the value of variable
$x_i$ in $s$ is 0 and the bit string $s'$ obtained on flipping it to 1
lies in $R$, and $x_i^+(s) = s^*$ if the value of
variable $x_i$ in $s$ is 0 and the bit string $s'$ obtained on
flipping it to 1 does not lie in $R$.  The function $x_i^-$ is
defined analogously.  In the rest of this article, we will use the
word ``flip" instead of ``token", and we will
use the words ``state,'' ``vertex,'' and ``satisfying assignment'' interchangeably.

A sequence of flips also defines a function, that is, the composition
of all the functions in the sequence. We call a flip sequence
\emph{invalid} at a given state $s$ if the sequence applied to $s$
results in invalid state $s^*$, and \emph{valid} otherwise. Two flip
sequences are \emph{equivalent} if they result in the same final state
when applied to the same starting state.  Finally, we call a flip
sequence \emph{canonical} if all positive flips in it occur before all
the negative flips. That is, the path from its first state (node) to the last is a canonical path.
Note that in any canonical flip sequence, each
flip occurs at most once. Given two states $s, t\in\Sc$, we say that a set $\C$
of flips \emph{transforms $s$ to $t$} if the elements of $\C$ can be arranged
in some order such that the resulting flip sequence transforms $s$ to
$t$. For a given state $s$ and flip set $\C$, we say $\C$ is \emph{valid}
if the elements of $\C$ can be arranged
in some order such that the resulting flip sequence applied to $s$
results in a valid state.

We describe a flip sequence simply by listing the flips in order. The flip
sequence formed by removing flip $f$ from $\F$ is
denoted $\F \setminus f$. The flip sequence obtained by reversing $\F$ is $\F^{-1}$,
and by performing $\F_1$ followed by $\F_2$ is
$\F_1\cdot\F_2$. We use $\C(\F)$ to denote the set of flips that
appear in $\F$. A flip sequence (set) consisting of only positive flips
will be called a \emph{positive flip sequence (set)}. We use $\F_0$ to denote an
empty flip sequence and, by convention, define it to be valid. For a flip
sequence $\F$, if $f\in\F$ appears before $f'\in\F$ in the sequence, then we say $f<_\F f'$.
For a tuple $t=(x_{i_1},\ldots,x_{i_d})$ of variables and a state $s$, we use $s^t$
to denote the string of values restricted to $x_{i_1}, \ldots, x_{i_d}$.
\subsubsection{Overview of the algorithm}
For a CNF($\mc{S}$) formula $\phi$ and two satisfying assignments $s$ and $t$,
if every relation in $\mc{S}$ is componentwise
bijunctive, then the algorithm of Gopalan et al. gives a polynomial time
algorithm to find a shortest path between $s$ and $t$.
Hence we will assume that every relation in $\phi$ is NAND-free and dual-Horn-free.

There are two crucial properties of NAND-free and dual-Horn-free relations
that help us design a polynomial time algorithm. First, we show in
Lemma~\ref{lemma-canonical} (originally proved by Gopalan et al.) that in a
NAND-free relation, any valid flip sequence from $s$ to $t$ can be transformed into
an equivalent canonical flip sequence, where all positive flips are performed before
all negative flips. Since the vertex reached after performing all
the positive flips has a larger Hamming weight than both $s$ and $t$, it can be viewed as
a common ancestor, and thus the shortest reconfiguration sequence defines a ``least common ancestor''.
Note however that finding such a least common ancestor may not be easy, as
not all orderings of those positive flips may be valid.

Next, we show that if the relation is both NAND-free and dual-Horn-free, then the set
of positive valid flip sets starting from a given satisfying assignment $s$ forms a distributive lattice~\cite{birkhoff1937}.
Thus using Birkhoff's representation theorem~\cite{birkhoff1937}, we obtain a partial order among the positive
flips that any valid flip sequence must follow. Moreover, since the positive
valid flip sets have a lattice structure, $s$ and $t$ have a
unique least common ancestor. We use the partial order to find it.

If every relation in $\mc{S}$ is OR-free and Horn-free, similar properties hold but the
role of positive and negative flips is ``reversed''.
In other words, in an OR-free relation, any valid flip sequence from $s$ to $t$ can be transformed into
an equivalent flip sequence, where all negative flips are performed before
all positive flips. Moreover, if the relation is both OR-free and Horn-free, the set of negative
flips becomes characterizable by a partial order.
Hence, we will only consider properties of NAND-free and dual-Horn-free relations.
Our algorithm for NAND-free and dual-Horn-free relations can easily be modified to handle
OR-free and Horn-free relations.

\subsubsection{The token system of NAND-free relations}
We begin by proving some useful properties of the token system formed by NAND-free
relations.

\begin{lemma}\label{lemma-swap}
For $R$ a NAND-free relation and $\F = f_1\ldots f_q$ a valid
flip sequence at $s \in R$,
if there exists $i\in \{1,\ldots,q-1\}$ such
that $f_i = x^-$ is a negative flip and $f_{i+1} = y^+$ is a positive flip,
with $x \neq y$,
then the sequence $\F' = f_1 \ldots f_{i-1}f_{i+1}f_i\ldots f_q$
is also valid at $s$ and is equivalent to $\F$, i.e., swapping $f_{i}$
and $f_{i+1}$ results in an equivalent flip sequence.
\end{lemma}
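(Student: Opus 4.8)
The plan is to follow the states visited by $\F$ immediately around positions $i$ and $i+1$ and to show that the swap changes only a single intermediate state, which NAND-freeness forces to remain in $R$. Concretely, for $0 \le j \le q$ let $s_j$ denote the state obtained by applying $f_1 \cdots f_j$ to $s$, so $s_0 = s$. Since $\F$ is valid at $s$ and the invalid state $s^*$ is absorbing, no prefix of $\F$ reaches $s^*$, hence every $s_j$ lies in $R$. Consider $s_{i-1}$, the state just before $f_i = x^-$. Validity of $x^-$ at $s_{i-1}$ forces $x$ to have value $1$ in $s_{i-1}$; validity of $f_{i+1} = y^+$ at $s_i = x^-(s_{i-1})$ forces $y$ to have value $0$ in $s_i$, and since $x \ne y$ the flip $f_i$ does not touch $y$, so $y$ also has value $0$ in $s_{i-1}$. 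Using the paper's notation $s^{(x,y)}$ for the restriction of a state to the coordinate pair $(x,y)$, we thus have $s_{i-1}^{(x,y)} = 10$, $s_i^{(x,y)} = 00$, and $s_{i+1}^{(x,y)} = 01$, while $s_{i-1}$, $s_i$, and $s_{i+1}$ agree on every coordinate other than $x$ and $y$.

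Next I would apply the swapped sequence $\F' = f_1 \cdots f_{i-1}\, f_{i+1}\, f_i\, f_{i+2}\cdots f_q$ to $s$. Its first $i-1$ flips again yield $s_{i-1}$; then $f_{i+1} = y^+$ yields the state $s'$ that agrees with $s_{i-1}$ on all coordinates except $y$, with $(s')^{(x,y)} = 11$; then $f_i = x^-$ yields the state whose restriction to $(x,y)$ is $01$ and which agrees with $s_{i-1}$ elsewhere, i.e.\ exactly $s_{i+1}$. So the whole claim reduces to showing $s' \in R$: granting that, $x^-$ is valid at $s'$ (its output is $s_{i+1}\in R$), so $\F'$ reaches $s_{i+1}$ after its first $i+1$ flips, the suffix $f_{i+2}\cdots f_q$ is then applied to precisely the same state $s_{i+1}$ as under $\F$, and hence $\F'$ is valid at $s$ and reaches the same final state, i.e.\ $\F' \equiv \F$.

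The one place NAND-freeness is used is the claim $s' \in R$, and I expect this to be the only real content. I would form the $2$-ary restriction $R'$ of $R$ via the map $X$ that sends position $x$ to the first variable of $R'$, position $y$ to the second, and every other position $j$ to $c_0$ or $c_1$ according to the common value of coordinate $j$ in $s_{i-1}, s_i, s_{i+1}, s'$. Then $f_X(10) = s_{i-1}$, $f_X(00) = s_i$, and $f_X(01) = s_{i+1}$ all lie in $R$, so $\{00,01,10\} \subseteq R'$, whereas $f_X(11) = s'$. If $s' \notin R$ then $R' = \{00,01,10\} = \{01,10,00\}$, contradicting the hypothesis that $R$ is NAND-free; therefore $s' \in R$. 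The argument is otherwise pure coordinate bookkeeping; the only subtlety worth stating carefully is the use of $x \ne y$ to transfer the value of $y$ back from $s_i$ to $s_{i-1}$ (so the relevant restriction is genuinely $2$-ary) together with the fact that validity of $\F$ rules out reaching $s^*$ on any prefix. \qed
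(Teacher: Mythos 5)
Your proof is correct and follows essentially the same route as the paper: identify the three states around the swap (the paper's $u,v,w$, your $s_{i-1},s_i,s_{i+1}$), observe that if the fourth assignment with $(x,y)=11$ were not in $R$ the induced $2$-ary restriction would be exactly the NAND relation, and conclude by NAND-freeness that the swapped path through that fourth state is valid and equivalent. Your write-up merely spells out the coordinate bookkeeping and the restriction map more explicitly than the paper does.
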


\begin{proof}
Let $u$ be the state right before applying $f_i$ in $\F$, $v=f_i(u)$ be the state after
applying $f_i$ but before applying $f_{i+1}$, and $w=f_{i+1}(v)$ be the one after applying
$f_{i+1}$. Thus it is clear that $u^{(x, y)} = 10, v^{(x, y)} = 00$, and
$w^{(x, y)} = 01$. Also, notice that since no other variables are flipped
between $u, v$, and $w$, the values of all variables other than $x$ and $y$ remain the same in the states $u$, $v$ and $w$.
Let $t$ be the Boolean string whose value is the same
as $u, v$, and $w$ on all variables except $x$ and $y$ and
$t^{(x, y)} = 11$. If $t\notin R$, then the substitution
described above gives us the relation $\{10, 00, 01\}$ on $x$
and $y$, which is precisely the NAND relation.
Since $R$ is NAND-free, $t\in R$ (Figure~\ref{figure-lemmas} (a))
and thus we can replace the path $u\rightarrow v \rightarrow w$ with the
path $u\rightarrow t\rightarrow w$. This is equivalent to swapping the flips $f_{i+1}$ and $f_i$.
\qed
\end{proof}

Lemma~\ref{lemma-canonical} now follows immediately. It
shows (first proved by Gopalan et al.~\cite{GKMP09}) that any valid flip sequence can be made canonical.

\begin{lemma}\label{lemma-canonical}
For $R$ a NAND-free relation, if $\F$ is a valid sequence at $s \in R$,
then there exists a valid canonical sequence $\F'$ equivalent
to $\F$ such that $\C(\F')\subseteq \C(\F)$ and, for any
two flips $f_1, f_2\in\F'$ of the same sign, if $f_1<_{\F'} f_2$
then $f_1<_{\F} f_2$, i.e., the relative order among flips of the same sign is preserved.
\end{lemma}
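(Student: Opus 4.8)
The statement to prove is Lemma~\ref{lemma-canonical}: every valid flip sequence $\F$ at $s$ in a NAND-free relation can be converted into an equivalent valid \emph{canonical} sequence $\F'$ with $\C(\F') \subseteq \C(\F)$, preserving the relative order of same-sign flips. The natural approach is a bubble-sort-style argument driven by Lemma~\ref{lemma-swap}. First I would observe that ``canonical'' means no negative flip precedes a positive flip, so the obstruction to being canonical is the existence of an \emph{inversion}: an adjacent pair $(f_i, f_{i+1})$ where $f_i$ is negative and $f_{i+1}$ is positive. Lemma~\ref{lemma-swap} handles exactly this case \emph{when the two flips touch distinct variables}, producing an equivalent valid sequence with the two swapped and with the same flip set. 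So the main loop is: while an adjacent bad pair exists, swap it; a standard potential-function argument (e.g.\ the number of pairs $(i,j)$ with $i<j$, $f_i$ negative, $f_j$ positive) shows this terminates. Since each swap is between a negative and a positive flip, it never reorders two flips of the same sign, which gives the order-preservation clause.

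The subtlety I would address next is the case $x = y$ in the bad pair, i.e.\ an adjacent $x^- , x^+$ (or the mirror situation). Here Lemma~\ref{lemma-swap} does not apply, but we don't need it: the subsequence $x^- x^+$ applied at a state $u$ returns to $u$ (the first flip requires $x=1$ in $u$, sends it to $0$, the second sends it back to $1$), so this pair is a no-op and can simply be \emph{deleted}. Deleting flips only shrinks $\C(\F)$, preserves validity (we're excising a detour that returns to a valid state), keeps the sequence equivalent, and cannot disturb the relative order of the remaining same-sign flips. Similarly an adjacent $x^+ x^-$ at $u$ — though not itself an inversion — is also a no-op; it's cleaner to delete all such cancelling adjacent pairs first (or whenever they arise), so that every remaining adjacent ``negative-then-positive'' pair involves distinct variables and is genuinely governed by Lemma~\ref{lemma-swap}. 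Combining deletions and swaps, I would phrase the whole thing as: repeatedly, if there is an adjacent cancelling pair delete it, else if there is an adjacent inversion $x^- y^+$ with $x \neq y$ swap it via Lemma~\ref{lemma-swap}; argue the process strictly decreases a lexicographic measure (length, then inversion count) and hence halts; at halting the sequence has no inversions, i.e.\ is canonical, with $\C \subseteq \C(\F)$ and same-sign order intact.

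The step I expect to be the main (minor) obstacle is making the termination and invariant bookkeeping airtight: I must check that a swap of a distinct-variable inversion strictly decreases the inversion count without increasing length, that a deletion strictly decreases length, and that neither operation can ever \emph{create} a same-sign reordering — the latter is immediate since swaps only ever exchange an adjacent opposite-sign pair and deletions remove a contiguous block, but it's worth stating explicitly. I would also note the base/edge cases: the empty sequence $\F_0$ is already canonical, and if $\F$ has no negative flips (or no positive flips) it is trivially canonical. Everything else is routine once Lemma~\ref{lemma-swap} is in hand.
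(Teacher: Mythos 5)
Your proposal is correct and follows essentially the same route as the paper's proof: repeatedly cancel an adjacent negative--positive pair on the same variable or swap a distinct-variable pair via Lemma~\ref{lemma-swap}, noting that same-sign order is untouched because only opposite-sign pairs are ever exchanged. Your added bookkeeping (the lexicographic measure of length then inversion count for termination, and the validity check for cancellations) simply makes explicit what the paper's terser argument leaves implicit.
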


\begin{proof}
If $\F$ is not canonical, it must have a negative flip followed by a positive flip somewhere. If both flips act on the same variable, we cancel them out; otherwise, we swap them using the proof of Lemma~\ref{lemma-swap}. Doing this repeatedly gives us the required canonical sequence $\F'$. The order among the flips of the same sign is preserved since we never swap two flips of the same sign.\qed
\end{proof}

\begin{lemma}
\label{lemma-union}
For $R$ a NAND-free relation,
if $\C_1$ and $\C_2$ are two positive flip sets
that are valid at $s \in R$, then $\C_1\cup\C_2$ is also a valid flip set at $s$.
\end{lemma}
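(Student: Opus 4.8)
The plan is to distill from NAND-freeness a two-variable ``diamond'' closure for positive flips and then propagate it along monotonically increasing paths. First I would record the diamond property: if $u \in R$, and $x$, $y$ are distinct variables whose value in $u$ is $0$, and each of the two assignments obtained from $u$ by flipping $x$ (respectively $y$) lies in $R$, then the assignment obtained from $u$ by flipping both $x$ and $y$ also lies in $R$. This is immediate --- otherwise the restriction of $R$ obtained by fixing every variable other than $x$ and $y$ to its value in $u$ equals $\{00,01,10\}$, i.e.\ the NAND relation, contradicting NAND-freeness. This is the direct two-dimensional analogue of the square used in the proof of Lemma~\ref{lemma-swap}, now for two positive flips rather than a negative flip followed by a positive one.

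Next I would package the diamond into a sliding statement: suppose $v \in R$, $z^+$ is a positive flip valid at $v$ (so $z$ has value $0$ in $v$ and the assignment $v'$ obtained by flipping $z$ lies in $R$), and $\F$ is a valid positive flip sequence at $v$. Walking along $\F$ and applying the diamond property at each vertex --- using at each step that the current vertex, its $\F$-successor, and its $z$-flip all lie in $R$ --- shows that the ``$z$-shifted'' vertices all lie in $R$ as well. Hence if $z^+ \notin \C(\F)$ then $\C(\F)$ is valid at $v'$, and if $z^+ \in \C(\F)$ then $\C(\F) \setminus \{z^+\}$ is valid at $v'$ (the shifted path simply performs $z^+$ first and then the remaining flips of $\F$).

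The core is the following strengthening, which I would prove by induction on $|\C_2|$, the base case being immediate: \emph{if $\C_1$ and $\C_2$ are positive flip sets valid at $s$, then $\C_1 \setminus \C_2$, the set of positive flips on the variables in $\C_1$ but not in $\C_2$, is a positive flip set valid at the state $s_2$ reached from $s$ by $\C_2$.} The lemma follows at once: concatenating a valid ordering of $\C_2$ at $s$ with a valid ordering of $\C_1 \setminus \C_2$ at $s_2$ gives a valid ordering of $\C_1 \cup \C_2$ at $s$. For the inductive step, choose a valid ordering of $\C_2$, let $z^+$ be its last flip and $\C_2'$ the remaining flips; then $\C_2'$ is valid at $s$ and strictly smaller, and if $v$ is the state reached from $s$ by $\C_2'$ then $z^+$ is valid at $v$ and applying it reaches $s_2$. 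By the induction hypothesis $\C_1 \setminus \C_2'$ is valid at $v$; fix a valid positive flip sequence $\F$ at $v$ with $\C(\F) = \C_1 \setminus \C_2'$ and apply the sliding statement at $v$ with the flip $z^+$. According to whether $z^+ \in \C_1$ (equivalently $z^+ \in \C(\F)$) or not, this yields that $(\C_1 \setminus \C_2') \setminus \{z^+\} = \C_1 \setminus \C_2$ is valid at $s_2$, which is what we needed.

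I expect the main obstacle to be exactly what forces the induction to be organized this way: an individual flip of $\C_2$ need not be valid at $s$ on its own, since it may become applicable only after some of the other flips of $\C_2$, so one cannot build $\C_1 \cup \C_2$ by inserting the flips of $\C_2$ into $\C_1$ one at a time and invoking the diamond directly. Instead the induction peels a flip off $\C_2$ and transports the validity of $\C_1$'s contribution across it. The remaining delicate point is the case $z^+ \in \C_1$: there the sliding statement must reorder a monotonically increasing path so that $z$ is flipped first, and although reordering positive flips is false in general, here it is rescued precisely because the $z$-flip of the path's starting vertex is already known to lie in $R$ --- which is the hypothesis the diamond-propagation requires.
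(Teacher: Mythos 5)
Your proof is correct, but it takes a genuinely different route from the paper. The paper disposes of this lemma in a few lines by reusing its canonicalization machinery: it considers the mixed-sign sequence $\F_1^{-1}\cdot\F_2$, which is valid at $u=\F_1(s)$, applies Lemma~\ref{lemma-canonical} to push all positive flips to the front, and observes that $\F_1$ followed by that positive prefix is valid at $s$ and has flip set $\C_1\cup\C_2$. You instead stay entirely within monotone (positive) paths: you extract from NAND-freeness a square-completion (``diamond'') fact for two positive flips valid at the same vertex --- the same missing-corner-$11$ square that underlies Lemma~\ref{lemma-swap}, just entered from the $00$ corner --- propagate it along a valid positive sequence to obtain a transport statement (a single valid positive flip can be moved to the front of any valid positive sequence, whether or not it occurs in it, which is essentially Lemma~\ref{lemma-another-nand-free} extended to the non-disjoint case), and then induct on $|\C_2|$, peeling off its last flip. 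Your induction in fact proves the slightly sharper statement that $\C_1\setminus\C_2$ is valid at the state reached by $\C_2$, which mirrors what the paper's argument delivers at $\F_1(s)$. What your approach buys is self-containedness --- no negative flips, no canonical sequences, no appeal to Lemmas~\ref{lemma-swap} and~\ref{lemma-canonical} --- and an explicit commutation lemma of independent use; what it costs is length, since the paper's proof is a near-immediate corollary of results it has already established. You also correctly flag the one delicate point, namely that reordering positive flips is only legitimate here because the transported flip is known to be valid at the start vertex, so the diamond propagation applies.
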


\begin{proof}
Let $u = \F_1(s)$ and $v = \F_2(s)$, where $\F_1$ and $\F_2$ are valid
flip sequences such that $\C(F_1) = \C_1$ and $\C(F_2) = \C_2$.
Clearly, $\F_1^{-1}\cdot\F_2$ is a valid flip
sequence from $u$ to $v$. Thus, we can apply Lemma~\ref{lemma-canonical} to the
sequence $\F_1^{-1}\cdot\F_2$ to transform it into the canonical sequence $\F$.
Let $\F^+$ denote the prefix of $\F$ that contains all the positive flips. It is clear
that $\F_1\cdot\F^+$ is a valid flip sequence at $s$ and $\C(\F_1\cdot\F^+) = \C_1\cup\C_2$.
\qed
\end{proof}

Later, we prove a similar lemma for the intersection of two flip sets, but for dual-Horn-free relations.
We conclude this subsection with a lemma that shows that if two disjoint
flips sets are valid at a state, we can, in some sense, perform them (the two sets of flips) one after the other in either order.

\begin{lemma}
\label{lemma-another-nand-free}
For $R$ a NAND-free relation and $\F_1$ and $\F_2$ two positive flip sequences
that are valid at $s\in R$, if $\C(\F_1)\cap\C(\F_2)=\emptyset$, then $\F_1$ is
valid at $\F_2(s)$ and $\F_2$ is valid at $\F_1(s)$.
\end{lemma}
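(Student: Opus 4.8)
The plan is to leverage Lemma~\ref{lemma-union} together with Lemma~\ref{lemma-canonical} in much the same spirit as the proof of Lemma~\ref{lemma-union} itself. First I would observe that, since $\F_1$ and $\F_2$ are both valid positive flip sequences at $s$ with disjoint flip sets, Lemma~\ref{lemma-union} tells us that $\C(\F_1)\cup\C(\F_2)$ is a valid positive flip set at $s$; let $\F$ be a valid flip sequence at $s$ realizing this union. I then want to argue that $\F$ can be taken to start with all of $\F_1$ (respecting its internal order) followed by all of $\F_2$. Concretely, set $u=\F_1(s)$; the concatenation $\F_1^{-1}\cdot\F$ is then a valid sequence at $u$ whose net effect is to perform exactly the flips in $\C(\F_2)$ (the flips of $\C(\F_1)$ cancelling in pairs), so applying Lemma~\ref{lemma-canonical} to $\F_1^{-1}\cdot\F$ yields an equivalent canonical sequence; because its net flip set is $\C(\F_2)$, a positive set, the canonical form consists purely of positive flips, and in fact is a valid positive flip sequence at $u=\F_1(s)$ whose flip set is $\C(\F_2)$. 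That gives one half of the statement; the symmetric argument with the roles of $\F_1$ and $\F_2$ exchanged gives the other half, namely a valid positive flip sequence at $\F_2(s)$ with flip set $\C(\F_1)$.

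The remaining subtlety is that the statement asks specifically that \emph{$\F_1$ itself} (not merely some reordering of its flips) be valid at $\F_2(s)$, and likewise $\F_2$ at $\F_1(s)$. To get this I would show that, for a NAND-free relation, the \emph{order} in which a valid positive flip set is applied is immaterial to validity. This follows by a standard exchange argument: if $\G$ and $\G'$ are two orderings of the same positive flip set $\C$ and $\G$ is valid at a state $r$, then any prefix of $\G'$ flips a subset of variables that is reachable by a valid prefix of $\G$ (one can transport between the two orderings one transposition at a time, and since all flips are positive and act on distinct variables, each intermediate state stays in $R$ because $\C$ being valid means the ``top'' state $\F(r)$ with all of $\C$ applied lies in $R$, and NAND-freeness — or more directly Lemma~\ref{lemma-union} applied to the relevant sub-sequences — guarantees the partial unions are in $R$ as well). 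Hence $\G'$ is valid at $r$ too. Applying this with $r=\F_2(s)$, $\C=\C(\F_1)$, and $\G'=\F_1$ finishes the proof; the symmetric instance handles $\F_2$ at $\F_1(s)$.

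I expect the main obstacle to be exactly this last point: carefully justifying that validity of a positive flip set at a state is independent of the ordering. The cleanest route is probably to prove a small auxiliary claim — ``if $\C$ is a positive flip set valid at $r$ and $\C'\subseteq\C$, then $\C'$ is valid at $r$ and $\C\setminus\C'$ is valid at $(\text{any valid ordering of }\C')$ applied to $r$'' — which one gets by repeatedly invoking Lemma~\ref{lemma-canonical} (to push a desired flip to the front of a valid sequence) together with Lemma~\ref{lemma-union}. Once that monotone/order-independence property of positive valid flip sets is in hand, the lemma is immediate. The affine/OR-free mirror image is symmetric and need not be spelled out, as the paper has already noted that the roles of positive and negative flips simply swap.
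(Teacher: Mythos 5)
There is a genuine gap, and it sits exactly where you flagged the ``remaining subtlety.'' Your first step (going through Lemma~\ref{lemma-union} and cancelling $\F_1^{-1}$) only yields that \emph{some} ordering of $\C(\F_2)$ is valid at $\F_1(s)$; to upgrade this to the specific sequences $\F_1$ and $\F_2$ you invoke an order-independence principle --- ``for a NAND-free relation, the order in which a valid positive flip set is applied is immaterial to validity'' --- together with the auxiliary claim that any subset $\C'\subseteq\C$ of a valid positive flip set is valid. Both claims are false. Take $R=\{00,10,11\}$, the satisfying assignments of $(x\vee\overline{y})$; this $R$ is NAND-free (and even vacuously dual-Horn-free), the set $\{x^+,y^+\}$ is valid at $s=00$ via the ordering $x^+y^+$, yet the ordering $y^+x^+$ is invalid (it passes through $01\notin R$) and the subset $\{y^+\}$ is not valid at $s$. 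Lemma~\ref{lemma-union} gives closure of valid flip sets under union, not under taking subsets, and it says nothing about intermediate states of an arbitrary reordering; indeed, if order-independence held, the partial order of Lemma~\ref{lemma-partial-order} would always be trivial and much of the paper's later machinery would be unnecessary. The paper itself warns that ``not all orderings of those positive flips may be valid.''

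The correct route, which is the paper's one-line proof, avoids any reordering principle: the sequence $\F_2^{-1}\cdot\F_1$ is valid at $\F_2(s)$ (undo $\F_2$, then run $\F_1$ from $s$), and applying Lemma~\ref{lemma-canonical} to it produces an equivalent valid canonical sequence. Because $\C(\F_1)$ and $\C(\F_2)$ involve disjoint variables, no negative/positive pair on the same variable can cancel, and the same-sign order-preservation clause of Lemma~\ref{lemma-canonical} forces the canonical sequence to be exactly $\F_1\cdot\F_2^{-1}$, i.e.\ $\F_1$ \emph{in its given order} followed by $\F_2^{-1}$. Validity of this sequence at $\F_2(s)$ gives in particular that $\F_1$ is valid at $\F_2(s)$, and the symmetric argument on $\F_1^{-1}\cdot\F_2$ gives the other claim. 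So the specific-order conclusion comes for free from the order-preservation built into Lemma~\ref{lemma-canonical}, not from any (false) general exchange property of NAND-free relations; your proof needs to be rebuilt around that observation.
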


\begin{proof}
Consider the sequence $\F_2^{-1}\cdot\F_1$ that
transforms $\F_2(s)$ to $\F_1(s)$. Applying Lemma~\ref{lemma-canonical} to it, we
obtain the canonical flip sequence $\F_1\cdot\F_2^{-1}$.
Thus $\F_1$ is valid at $\F_2(s)$. Using the same argument on the
sequence $\F_1^{-1}\cdot\F_2$ proves the other claim.
\qed
\end{proof}

\subsubsection{The token system of dual-Horn-free relations}
In this section, we establish stronger properties with the assumption that
$R$ is not only NAND-free, but is also dual-Horn-free. We begin by establishing a simple property of relations that are NAND-free and dual-Horn-free in the following lemma.
\ignore{
The first lemma states that if we have two length-two positive flip
sequences starting from some state, both ending at the same edge label, then the
two flips on each of the two paths can be swapped.
}

\begin{lemma}\label{lemma-tree-like}
Let $R$ be a NAND-free and dual-Horn-free relation and
$s, t_1, t_2 \in R$ be three distinct states such that the flip sequence $\F_1 = x_k^+x_i^+$ transforms
$s$ to $t_1$, the flip sequence $\F_2 = x_j^+x_i^+$ transforms $s$ to $t_2$, and $x_k \neq x_j$.
Then the sequence $\F_1' = x_i^+x_k^+$ also transforms $s$ to $t_1$ and the
sequence $\F_2' = x_i^+x_j^+$ also transforms $s$ to $t_2$, i.e., we can
swap the flips in both $\F_1$ and $\F_2$.
\end{lemma}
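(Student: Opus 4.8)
The plan is to reduce both assertions to the single claim that the state $s'$ obtained from $s$ by flipping $x_i$ from $0$ to $1$ belongs to $R$. To set this up, I would first observe that validity of $\F_1 = x_k^+x_i^+$ and $\F_2 = x_j^+x_i^+$ at $s$ forces $x_i$, $x_j$, $x_k$ all to equal $0$ in $s$ and to be pairwise distinct (if two of them coincided, one of the four positive flips would be applied to a variable already set to $1$, giving the invalid state). Granting $s' \in R$, the sequence $x_i^+x_k^+$ carries $s$ first to $s'$ and then to the state that agrees with $s$ on every coordinate except that $x_i = x_k = 1$, which is exactly $t_1$; likewise $x_i^+x_j^+$ carries $s$ through $s'$ to $t_2$. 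Both intermediate states are then in $R$ ($s'$ by the claim, $t_1$ and $t_2$ by hypothesis), so the lemma follows.

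To prove $s' \in R$, I would pass to the restriction $R'$ of $R$ obtained by substituting, into every variable other than $x_i$, $x_j$, $x_k$, its value in $s$; this is a $3$-ary relation, and since a restriction of a restriction is again a restriction, $R'$ inherits NAND-freeness and dual-Horn-freeness from $R$. Writing triples in the order $(x_i, x_j, x_k)$, the states $s$, $x_k^+(s)$, $t_1$, $x_j^+(s)$, $t_2$ show $\{000, 001, 101, 010, 110\} \subseteq R'$, and the goal becomes $100 \in R'$. A first application of NAND-freeness clears an obstacle: fixing $x_i = 0$ in $R'$ leaves a $2$-ary relation on $(x_j, x_k)$ containing $\{00, 01, 10\}$; were it equal to $\{00, 01, 10\}$ it would be the NAND relation, so it must also contain $11$, i.e. $011 \in R'$.

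Now I would argue by contradiction, assuming $100 \notin R'$. If $111 \in R'$ as well, then $R'$ is exactly $\{0,1\}^3 \setminus \{100\}$, the relation explicitly forbidden by dual-Horn-freeness. If instead $111 \notin R'$, then identifying the coordinates $x_i$ and $x_j$ in $R'$ --- that is, forming the $2$-ary relation $\{(p,q) : (p,p,q) \in R'\}$ on the merged coordinate $p$ and on $q = x_k$ --- produces a relation containing $\{00, 01, 10\}$ (coming from $000$, $001$, $110 \in R'$) but missing $11$ (since $111 \notin R'$), hence the NAND relation again. Either branch contradicts a freeness hypothesis, so $100 \in R'$ and $s' \in R$.

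The step I expect to be the crux is this final case analysis together with the preliminary observation forcing $011 \in R'$: one is naturally tempted to extract the contradiction from a single two-variable restriction, but that does not work --- the restriction of $R$ to $(x_i, x_k)$ is only forced to contain $\{00, 01, 11\}$, a harmless implication relation forbidden by neither property --- so one genuinely needs the three-variable picture and the split on membership of $111$. The remaining care is bookkeeping: keeping straight which triples are known to lie in $R'$, and checking that each identification and constant substitution used really is a restriction in the sense of Section~\ref{sec-prelim}.
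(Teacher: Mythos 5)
Your proof is correct, and it follows the same overall strategy as the paper's: pass to the $3$-ary restriction $R'$ of $R$ on $(x_i,x_j,x_k)$ obtained by freezing the remaining variables at their values in $s$, use NAND-freeness to supply the missing corners of the cube, and then use dual-Horn-freeness to force $100\in R'$, through which both rerouted sequences $x_i^+x_k^+$ and $x_i^+x_j^+$ pass. The one place you diverge is in handling $111$: the paper first obtains $011$ and then $111$ by two applications of Lemma~\ref{lemma-swap} (each a constant-substitution restriction that would otherwise be the NAND relation), so that all seven triples other than $100$ are known to lie in $R'$ and dual-Horn-freeness finishes in one step; you instead establish $011$ the same way but then split on whether $111\in R'$, and in the negative branch derive a NAND relation by identifying $x_i$ with $x_j$ --- a restriction by identification of variables, which the paper's argument never needs. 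Note that this second branch does not actually use the assumption $100\notin R'$, so it really proves $111\in R'$ unconditionally, after which you are back on the paper's track; both branches are sound, the identification is a legitimate restriction in the sense of Section~\ref{sec-prelim}, and your preliminary reductions (pairwise distinctness of $x_i,x_j,x_k$, all three being $0$ in $s$, and the fact that everything reduces to $x_i^+(s)\in R$) are all correct.
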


\begin{proof}
For $u_1=x_k^+(s)$ and $u_2=x_j^+(s)$, the sequence $x_j^-x_k^+$
transforms $u_2$ to $u_1$. We can reorder the sequence to obtain
$x_k^+x_j^-$, using Lemma~\ref{lemma-swap}. For $v=x_k^+(u_2)$, we can use a
similar argument to show that $x_i^+$ is a valid flip at $v$; we let
$w=x_i^+(v)$. The values of variables $x_i$, $x_j$, and $x_k$ at states
$s, u_1, u_2, t_1, t_2, v$, and $w$ form exactly the seven satisfying
assignments $\{000, 001, 010, 101, 110, 011, 111\}$ of the
dual-Horn clause $(\overline{x_i}\vee x_j\vee x_k)$ (Figure~\ref{figure-lemmas} (b)).
But since $R$ is dual-Horn-free, there must also exist the state $v'$
for which $x_i = 1, x_j=0, x_k=0$. The path $s\rightarrow v'\rightarrow t_1$
gives the sequence $x_i^+x_k^+$ and the path $s\rightarrow v'\rightarrow t_2$
gives the sequence $x_i^+x_j^+$.
\qed
\end{proof}

\begin{figure}
\hspace{15 mm}
\subfloat[]{\includegraphics[scale=0.35]{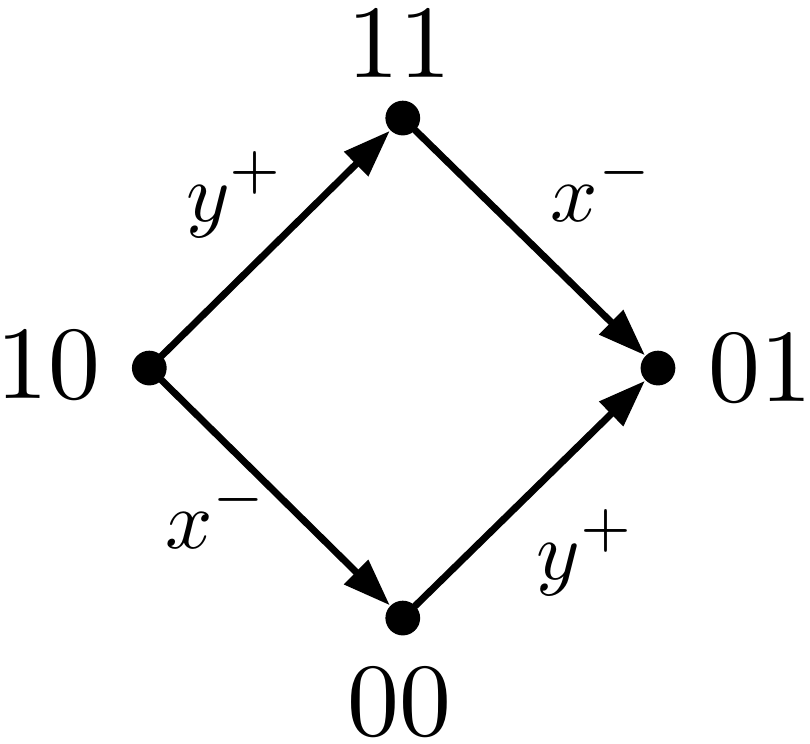}}
\hspace{25 mm}
\subfloat[]{\includegraphics[scale=0.35]{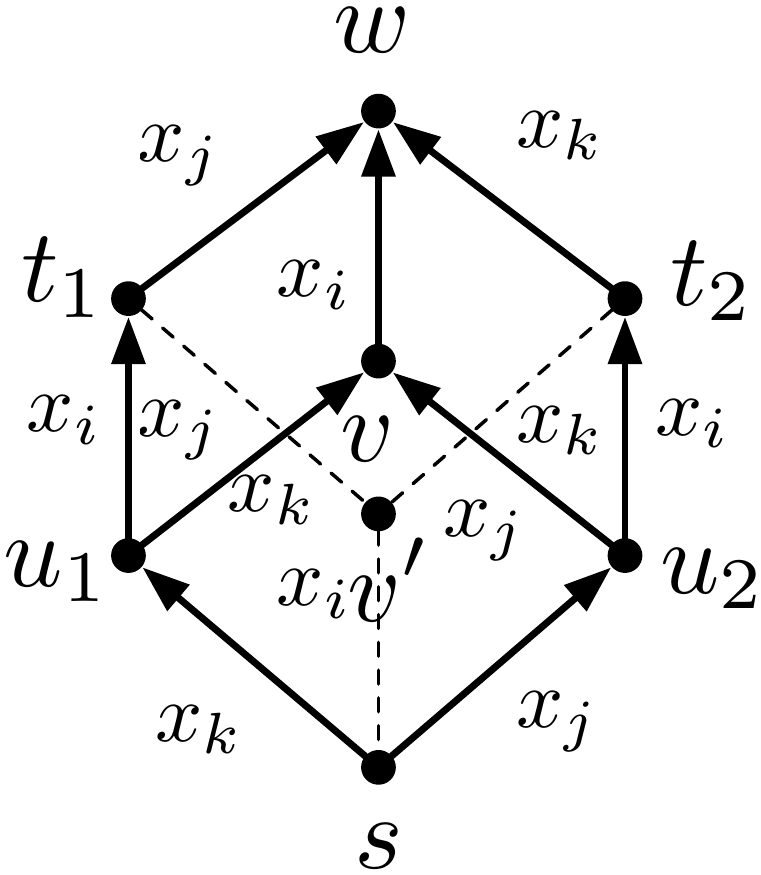}}
\caption{(a) Example for Lemma~\ref{lemma-swap} (b) Example for Lemma~\ref{lemma-tree-like}}
\label{figure-lemmas}
\end{figure}

The seemingly innocuous lemma above turns out to be very powerful. In the following sequence of lemmas, we cleverly build on top of it to eventually prove that the set of all positive valid flip sets starting from an assignment $s$ forms a distributive lattice. The lattice structure then helps us formulate a polynomial time algorithm for computing the shortest reconfiguration path.

\begin{lemma}\label{lemma-another-tree-like}
Let $R$ be a NAND-free and dual-Horn-free relation and $s, t\in R$ be two satisfying assignments such that $x^+y^+$ is a valid flip sequence at $s$ and $y^+$ is a valid flip at $t$. Furthermore, let $\F$ be a positive flip sequence such that $\F(s) = t$ and $x^+\not\in\C(\F)$. Then, the sequence $y^+x^+$ must also be valid at $s$.
\end{lemma}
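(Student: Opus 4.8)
The plan is to induct on the length of $\F$. Before starting, I would record a few structural facts. Since $x^+y^+$ is valid at $s$ we must have $x\neq y$ and $x=y=0$ in $s$; since $\F$ is positive and $x^+\notin\C(\F)$, the value of $x$ is unchanged along $\F$, so $x=0$ in $t$ as well. Moreover $y=0$ in $t$ because $y^+$ is valid at $t$, and the same reasoning forces $y^+\notin\C(\F)$. Consequently, if $\F$ is nonempty and we write $\F=z^+\cdot\F'$, then its first flip acts on a variable $z\notin\{x,y\}$. For the base case $\F=\F_0$ we have $s=t$: setting $a=x^+(s)$ and $b=y^+(a)$, both lie in $R$, and since only $x$ and $y$ are flipped, $b$ is exactly $s$ with $x$ and $y$ set to $1$, i.e.\ $b=x^+(y^+(s))$; as $y^+(s)\in R$ (because $y^+$ is valid at $t=s$), we get $x^+(y^+(s))=b\in R$, so $y^+x^+$ is valid at $s$.

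For the inductive step, write $\F=z^+\cdot\F'$ with $z\notin\{x,y\}$ and put $s'=z^+(s)$. The positive flip sets $\{x^+,y^+\}$ and $\{z^+\}$ are disjoint and both valid at $s$ (the former via $x^+y^+$, the latter because $z^+$ is the first flip of the valid sequence $\F$), so Lemma~\ref{lemma-another-nand-free} gives that $x^+y^+$ is valid at $s'$. Now $\F'$ is a positive flip sequence with $\F'(s')=t$, with $x^+\notin\C(\F')\subseteq\C(\F)$, and $y^+$ is still valid at $t$; since $|\F'|<|\F|$, the induction hypothesis yields that $y^+x^+$ is valid at $s'$. Because $z^+$ is valid at $s$ with $z^+(s)=s'$, the sequence $z^+y^+x^+$ is valid at $s$, and in particular its prefix $z^+y^+$ is valid at $s$.

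It remains to bring $x^+$ and $y^+$ to the front at $s$ itself. Here I would apply Lemma~\ref{lemma-tree-like} with $x_i=y$, $x_k=x$, $x_j=z$: the sequence $\F_1=x^+y^+$ transforms $s$ to the state $t_1$ obtained from $s$ by setting $x,y$ to $1$, the sequence $\F_2=z^+y^+$ transforms $s$ to the state $t_2$ obtained from $s$ by setting $z,y$ to $1$, and $x_k=x\neq z=x_j$. The states $s,t_1,t_2$ are pairwise distinct, since they disagree on $x$, on $y$, and on $x$ respectively (using $z\notin\{x,y\}$, so $t_2$ has $x=0$). Hence Lemma~\ref{lemma-tree-like} asserts that $x_i^+x_k^+=y^+x^+$ transforms $s$ to $t_1$; in particular $y^+x^+$ is valid at $s$, which closes the induction. (Note this last step also delivers $y^+(s)\in R$, the one piece of the conclusion not already contained in the hypotheses.)

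The main obstacle is exactly this final move of the inductive step: after the recursion we only know $y^+x^+$ is valid at the intermediate state $s'=z^+(s)$, and reordering positive flips is not free in general, so one cannot simply ``peel off'' $z^+$. The key insight is that the very flip $z^+$ deleted from $\F$ combines with $y^+$ to form a second length-two positive sequence that shares its last flip with $x^+y^+$ — precisely the configuration Lemma~\ref{lemma-tree-like} requires in order to swap $x^+$ and $y^+$ at $s$. Secondary care is needed to verify the distinctness hypothesis of Lemma~\ref{lemma-tree-like} and to establish $y^+\notin\C(\F)$ up front, which is what guarantees that the first flip of $\F$ acts on a variable outside $\{x,y\}$ so that Lemma~\ref{lemma-another-nand-free} applies.
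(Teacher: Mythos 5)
Your proof is correct and follows essentially the same route as the paper's: both transport the validity of $x^+y^+$ along the path $\F$ via Lemma~\ref{lemma-another-nand-free} and then invoke Lemma~\ref{lemma-tree-like} on the two length-two sequences sharing the final flip $y^+$ to perform the swap. The only difference is organizational --- you phrase it as induction on $|\F|$, peeling off the first flip and applying Lemma~\ref{lemma-tree-like} at $s$, whereas the paper packages the same propagation as a minimal-Hamming-weight-vertex argument and applies Lemma~\ref{lemma-tree-like} at the predecessor of that vertex.
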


\begin{proof}
Let $v$ be the vertex with smallest Hamming weight on the path corresponding to $\F$
from $s$ to $t$ (including $s$ and $t$)
at which $y^+$ is a valid flip. Let $\F_1 = x^+y^+$ and let $\F_2$ be the
positive flip sequence that transforms $s$ to $v$, i.e. $v = \F_2(s)$.  Note that
$\C(\F_1)\cap\C(\F_2)=\emptyset$, as neither $x^+$ nor $y^+$ can appear in $\C(\F_2)$ (See Figure~\ref{figure-lemmas-seven-eight}(a)).
If $v = s$, we are done; then let us assume this not to be the case.
Let $u$ be the vertex immediately
before $v$ on the path from $s$ to $t$ and let $z^+(u) = v$.
Since $\C(\F_1)\cap\C(\F_2)=\emptyset$ and $\C(\F_1) \cap \{\C(\F_2)\setminus\{z^+\}\}=\emptyset$,
we can apply Lemma~\ref{lemma-another-nand-free} at $s$, which implies that
$x^+y^+$ must be valid at both $u$ and $v$. Now we use Lemma~\ref{lemma-tree-like} at $u$. Since both
$x^+y^+$ and $z^+y^+$ are valid sequences at $u$, $y^+x^+$ must also
be a valid sequence at $u$. This contradicts the assumption that $v$ was the
vertex with smallest Hamming weight on the path where $y^+$ was a valid flip.
\qed
\end{proof}

\begin{lemma}
\label{lemma-one-more-tree-like}
For $R$ a NAND-free and dual-Horn-free relation, if
$\F_1\cdot x^+\cdot y^+$ and $\F_2\cdot y^+$ are both valid positive flip
sequences at $s \in R$ such that $x^+\not\in\C(\F_2)$
then $\F_1\cdot y^+\cdot x^+$ is also valid at $s$.
\end{lemma}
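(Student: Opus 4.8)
The plan is to reduce the statement to Lemma~\ref{lemma-another-tree-like}, applied not at $s$ but at the state $p:=\F_1(s)$. First I would dispatch the easy bookkeeping. Since $\F_1\cdot x^+\cdot y^+$ is a valid positive flip sequence at $s$, the sequence $\F_1$ contains neither $x^+$ nor $y^+$, we have $x\neq y$, and both $x$ and $y$ have value $0$ in $s$ and hence in $p$ (as $\F_1$ only flips other variables upward and $x^+$ does not touch $y$). Moreover $x^+y^+$ is a valid sequence at $p$. Consequently, to prove the lemma it suffices to exhibit a positive flip sequence $\F'$ with $\F'(p)=t'$, with $x^+\notin\C(\F')$, and with $y^+$ valid at $t'$: Lemma~\ref{lemma-another-tree-like} applied at $p$ (with its ``$t$'' taken to be $t'$ and its ``$\F$'' taken to be $\F'$) then yields that $y^+x^+$ is valid at $p$, i.e.\ that $\F_1\cdot y^+\cdot x^+$ is valid at $s$.

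To build such an $\F'$ I would splice together the two given sequences. Let $\F_2'$ be a valid ordering of the positive flip set $\C(\F_2)\cup\{y^+\}$ at $s$, which exists because $\F_2\cdot y^+$ is valid at $s$. Then $\F_1^{-1}\cdot\F_2'$ is a valid sequence from $p$ to $\F_2'(s)$, so by Lemma~\ref{lemma-canonical} it has an equivalent canonical sequence; write this canonical sequence as $\mathcal H\cdot\mathcal N$ with $\mathcal H$ its positive prefix and $\mathcal N$ its negative suffix. Then $\mathcal H$ is a positive flip sequence valid at $p$ (being a prefix of a valid sequence), and $\C(\mathcal H)$ consists of positive flips from $\C(\F_1^{-1})\cup\C(\F_2')=\C(\F_2)\cup\{y^+\}$; in particular $x^+\notin\C(\mathcal H)$ since $x^+\notin\C(\F_2)$ and $x\neq y$. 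Furthermore $y^+\in\C(\mathcal H)$: the canonical sequence $\mathcal H\cdot\mathcal N$ takes $p$, where $y=0$, to $\F_2'(s)$, where $y=1$ (as $y^+\in\C(\F_2')$), so $y$ is flipped an odd number of times; since each flip occurs at most once in a canonical sequence, $y$ is flipped exactly once, by $y^+$, which must therefore lie in the positive part $\mathcal H$. Taking $\F'$ to be the prefix of $\mathcal H$ up to but not including this occurrence of $y^+$, and $t':=\F'(p)$, gives precisely the data required in the previous paragraph (if $\F'=\F_0$ then $y^+$ is valid at $p$ directly, and the conclusion still follows from Lemma~\ref{lemma-another-tree-like}, whose proof treats this degenerate case).

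The step I expect to be the crux is the middle one: verifying that the canonicalization of $\F_1^{-1}\cdot\F_2'$ genuinely lands $y^+$ in its positive prefix while keeping $x^+$ out entirely. This rests on two features of Lemma~\ref{lemma-canonical} --- that the flip set can only shrink (so no $x^+$ can be introduced) and that the number of times any given variable is flipped has fixed parity along any equivalent sequence (so $y^+$ must persist). Everything else is routine: the initial bookkeeping about $s$, and the single appeal to Lemma~\ref{lemma-another-tree-like} at $p$, which carries the real combinatorial weight. \qed
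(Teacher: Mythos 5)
Your proof is correct and takes essentially the same route as the paper: reduce the statement to a single application of Lemma~\ref{lemma-another-tree-like} at $\F_1(s)$, after canonicalizing (via Lemma~\ref{lemma-canonical}) a sequence leading from $\F_1(s)$ toward $\F_2(s)$. The only difference is cosmetic: the paper canonicalizes $\F_1^{-1}\cdot\F_2$ and uses Lemma~\ref{lemma-another-nand-free} to transfer the validity of $y^+$ from $\F_2(s)$ to the peak of the canonical path, whereas you append $y^+$ before canonicalizing and locate the target state by cutting the positive prefix at $y^+$ via your parity argument.
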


\begin{proof}
Let $u = \F_1(s)$ and $v = \F_2(s)$. We apply Lemma~\ref{lemma-canonical} to the sequence
$\F_1^{-1}\cdot\F_2$ that transforms $u$ to $v$ to obtain the canonical sequence $\F = \F^+\cdot\F^-$.
Let $w$ be the vertex with maximum Hamming weight
on this canonical path (Figure~\ref{figure-lemmas-seven-eight}(b)). Hence,
we have $w = \F^+(u)$ and $v = \F^-(w)$. Note that $\F$ does not involve flips of the variables $x$ or $y$. 

Since $y^+$ is a valid flip at $v$, $y^+\not\in\C(\F^-)$, and the path from $v$ to $w$ is
monotonically increasing, from Lemma~\ref{lemma-another-nand-free}, $y^+$ is also valid at $w$.
Now using Lemma~\ref{lemma-another-tree-like}, since $x^+y^+$ is valid at $u$, $x^+\not\in\F^+$, and $y^+$ is valid at $w$, we have that $y^+x^+$ is also valid at $u$.
\qed
\end{proof}

\begin{figure}
%\hspace{15 mm}
\subfloat[]{\includegraphics{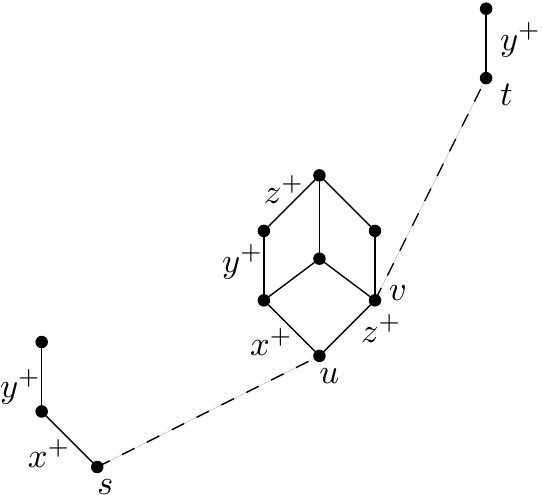}}
\hspace{3 mm}
\subfloat[]{\includegraphics[scale=1]{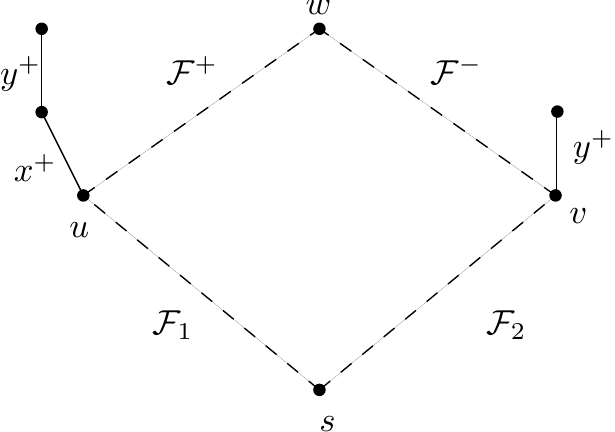}}
\caption{Dotted lines denote paths and solid lines denote edges. Hamming weight increases in the upward direction.
(a) Proof of Lemma~\ref{lemma-another-tree-like} (b) Proof of Lemma~\ref{lemma-one-more-tree-like}.}
\label{figure-lemmas-seven-eight}
\end{figure}

Lemma~\ref{lemma-union} already shows that the set of valid flip sets is closed under union. To prove that the set of valid flip sets forms a distributive lattice, we need to show that it is also closed under intersection, which we do in the next lemma.

\begin{lemma}
\label{lemma:intersection}
For $R$ a NAND-free and dual-Horn-free relation, if $\C_1$ and $\C_2$ are two positive flip
sets that are valid at $s \in R$, then $\C_1\cap\C_2$ is also a valid flip set at $s$.
\end{lemma}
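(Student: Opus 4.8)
The plan is to mimic the structure used for Lemma~\ref{lemma-union} (closure under union), but running the argument ``from above'' rather than ``from below'', and to use the stronger dual-Horn-free lemmas (in particular Lemma~\ref{lemma-one-more-tree-like}) to push the common flips down past the non-common ones. Concretely, let $\F_1$ and $\F_2$ be valid positive flip sequences at $s$ with $\C(\F_i)=\C_i$, and set $u=\F_1(s)$, $v=\F_2(s)$. By Lemma~\ref{lemma-union} there is a common ``ancestor'' state $w$ reachable from $s$ by the flip set $\C_1\cup\C_2$; equivalently, starting from $u$ we can perform exactly the flips in $\C_2\setminus\C_1$ (in some valid order) to reach $w$, and starting from $v$ we can perform exactly the flips in $\C_1\setminus\C_2$ to reach $w$. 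The goal is to show that, starting from $u$, the flips in $\C_1\setminus(\C_1\cap\C_2)=\C_1\setminus\C_2$ can be \emph{removed} in some order, i.e. that $\C_1\cap\C_2$ is reachable from $s$; dually, the flips in $\C_1\cap\C_2$ can be brought to the front of a valid arrangement of $\C_1$.

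First I would fix a valid sequence $\F_1$ realizing $\C_1$ from $s$, and argue that I may reorder it so that every flip of $\C_1\cap\C_2$ comes before every flip of $\C_1\setminus\C_2$. The tool for a single ``promotion'' step is Lemma~\ref{lemma-one-more-tree-like}: if in the current arrangement we have a suffix pattern $\F'\cdot x^+\cdot y^+$ with $x^+\in\C_1\setminus\C_2$ and $y^+\in\C_1\cap\C_2$ adjacent, we need a second valid sequence of the form $\F''\cdot y^+$ at $s$ with $x^+\notin\C(\F'')$ in order to conclude that $\F'\cdot y^+\cdot x^+$ is valid; such a witness is exactly (a canonical form of) $\F_2$, since $y^+\in\C_2$ and $x^+\notin\C_2$. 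Iterating these adjacent swaps (a standard bubble-sort argument, terminating because the number of $(\text{non-common},\text{common})$ inversions strictly decreases) yields a valid arrangement $\F_1^{\cap}\cdot\F_1^{\setminus}$ of $\C_1$ in which the prefix $\F_1^{\cap}$ uses exactly the flips of $\C_1\cap\C_2$. Then $\F_1^{\cap}(s)$ is a valid state reached using precisely the flip set $\C_1\cap\C_2$, so $\C_1\cap\C_2$ is valid at $s$, as required.

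The main obstacle I expect is making the single promotion step go through when the two flips we want to swap are \emph{not} already adjacent, and when the hypotheses of Lemma~\ref{lemma-one-more-tree-like} do not literally match the current arrangement (the lemma wants the pattern at the very end of the sequence, with a clean witness $\F_2\cdot y^+$ where $x^+$ is absent from $\F_2$). I would handle this exactly as Lemmas~\ref{lemma-another-tree-like} and~\ref{lemma-one-more-tree-like} were handled in the paper: truncate $\F_1$ at the relevant point so the pattern $x^+ y^+$ sits at the end of a prefix, apply Lemma~\ref{lemma-one-more-tree-like} to that prefix to get $\F' y^+ x^+$ valid, and then re-extend using Lemma~\ref{lemma-another-nand-free} (disjointness of the remaining tail from $\{x^+,y^+\}$) to recover validity of the full reordered sequence. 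The witness hypothesis is clean: take any canonical valid ordering of $\F_2$ (all positive, Lemma~\ref{lemma-canonical}); since $y^+\in\C_2$ it appears in that ordering, and since $x^+\notin\C_2$ it does not, so prefixing $\F_2$ up to $y^+$ gives the needed $\F''\cdot y^+$ with $x^+\notin\C(\F'')$.

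A secondary point to be careful about is the base case and bookkeeping: if $\C_1\cap\C_2=\emptyset$ the claim is trivial (the empty flip set $\F_0$ is valid by convention), and if $\C_1\subseteq\C_2$ or vice versa it is immediate; otherwise the bubble-sort measure is the number of pairs $(x^+,y^+)$ with $x^+\in\C_1\setminus\C_2$, $y^+\in\C_1\cap\C_2$, and $x^+<_{\F_1}y^+$, which drops by one at each adjacent swap, guaranteeing termination. Combined with Lemma~\ref{lemma-union}, this establishes that the positive valid flip sets at $s$ are closed under both union and intersection, i.e. they form a (finite) lattice; distributivity then follows because union and intersection of sets are distributive, setting up Birkhoff's representation theorem and the partial order on flips that the algorithm exploits.
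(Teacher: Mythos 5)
Your proposal is correct and follows essentially the same route as the paper's proof: fix a valid ordering of $\C_1$, repeatedly swap an adjacent pair $x^+\cdot y^+$ with $x^+\in\C_1\setminus\C_2$ and $y^+\in\C_1\cap\C_2$ by invoking Lemma~\ref{lemma-one-more-tree-like} with the prefix of a valid ordering of $\C_2$ up to $y^+$ as the witness (legitimate since $x^+\notin\C_2$), until all common flips form a valid prefix realizing $\C_1\cap\C_2$. One small simplification: your re-extension step needs no appeal to Lemma~\ref{lemma-another-nand-free}, since swapping two adjacent positive flips of distinct variables leaves the state after the pair (and hence the validity of the remaining tail) unchanged.
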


\begin{proof}
If $\C_1 \subseteq \C_2$ or $\C_2\subseteq\C_1$, then the statement is trivial. Otherwise, consider any valid ordering $\F_1$ of $\C_1$. We show that if $x^+$ and $y^+$ are two consecutive elements of $\F_1$ such that $x^+\in\C_1\backslash\C_2$, $y^+\in\C_1\cap\C_2$ and $x^+<_{\F_1} y^+$, then swapping $x^+$ and $y^+$ also gives a valid ordering of $\C_1$. Applying such swaps repeatedly, we get an ordering where all elements of $\C_1\cap\C_2$ appear before all elements of $\C_1\backslash\C_2$ thus proving that $\C_1\cap\C_2$ is a valid set at $s$.

To see how to swap $x^+$ and $y^+$ in $\F_1$, suppose $u$ is the vertex on the path corresponding to $\F_1$ on which the sequence $x^+\cdot y^+$ is performed, and consider an arbitrary valid ordering $\F_2$ of $\C_2$. Let $v$ be the vertex on the path corresponding to $\F_2$ on which $y^+$ is performed. Such a vertex exists since $y^+\in\C_1\cap\C_2$. Now, since $x^+\cdot y^+$ is valid at $u$, $y^+$ is valid at $v$ and the monotonically increasing path from $s$ to $v$ does not contain the flip $x^+$ (since $x^+\in\C_1\backslash\C_2$), applying Lemma~\ref{lemma-one-more-tree-like}, we can swap $y^+$ and $x^+$ in $\F_1$.\qed
\end{proof}

\ignore{%%%%%%%%%%%%%%%%%%%%%%%%%%%%%%%%%%%}
\begin{proof}
If $\C_1 \subseteq \C_2$ or $\C_2\subseteq\C_1$, then the statement is trivial.
Otherwise, we prove a slightly stronger claim.
Consider any valid orderings $\F_1$ and $\F_2$ of $\C_1$ and $\C_2$, respectively.
Let $\F^*_1$ and $\F^*_2$ denote the subsequences of $\F_1$ and $\F_2$, respectively, which
consist of only the positive flips in $\C_1\cap\C_2$. Note that even though $\C(\F^*_1) = \C(\F^*_2) = \C_1\cap\C_2$, the
orderings of the flips in $\F^*_1$ and $\F^*_2$ might not be identical.
We show that both $\F^*_1$ and $\F^*_2$ are valid at $s$.

We will use induction on $|\C_1\cap\C_2|$.
When $|\C_1\cap\C_2| = 0$, the claim holds trivially.
When $\C_1\cap\C_2 = \{ y^+ \}$, we show that we can repeatedly swap $y^+$ in
$\F_1$ and $\F_2$ such that $y^+$ becomes the first positive flip in both. This will show
that $\F^*_1 = \F^*_2 = y^+$ is valid at $s$.

If $y^+$ is the first flip of $\F_1$ ($\F_2$), then let $x^+$ be the flip that precedes
$y^+$ in $\F_2$ ($\F_1$) and let $\F'_2 \cdot x^+ \cdot y^+$ ($\F'_1 \cdot x^+ \cdot y^+$) denote the
corresponding subsequence of $\F_2$ ($\F_1$). Note that $\F'_2$ could be empty (if the first flip of $\F_2$ is $x^+$).
Since $\C(\F'_2) \cap \{y^+\}= \emptyset$ ($\C(\F'_1) \cap \{y^+\}= \emptyset$),
we can apply Lemma~\ref{lemma-another-nand-free}, which implies that
$y^+$ must also be a valid flip at $\F'_2(s)$ ($\F'_1(s)$) and therefore $x^+$ and $y^+$ can be swapped.
By repeated applications of Lemma~\ref{lemma-another-nand-free}, we can make
$y^+$ the first flip of $\F_2$ ($\F_1$).
If $y^+$ is not the first flip of neither $\F_1$ nor $\F_2$, we will appeal to
Lemma~\ref{lemma-one-more-tree-like} instead of Lemma~\ref{lemma-another-nand-free}.
Let $x^+$ and $z^+$ be the flips in $\F_1$ and $\F_2$, respectively, that precede $y^+$
and let $\F'_1 \cdot x^+ \cdot y^+$ and $\F'_2 \cdot z^+ \cdot y^+$ denote the
corresponding subsequences of $\F_1$ and $\F_2$.
Since $x^+\in\C_1\backslash\C_2$ and $z^+\in\C_2\backslash\C_1$,
we have $z\neq x$, $\C(\F'_1) \cap \{z^+\} = \emptyset$, and $\C(\F'_2) \cap \{x^+\} = \emptyset$.
Thus we have two valid positive flip sequences at $s$ such that one ends in the
sequence $x^+\cdot y^+$ and the other ends in $z^+\cdot y^+$. Applying
Lemma~\ref{lemma-one-more-tree-like}, we can swap $y^+$ and $x^+$ in $\F_1$ and $y^+$ and $z^+$ in $\F_2$.
Similarly, by repeated applications of Lemma~\ref{lemma-one-more-tree-like}
(and possibly Lemma~\ref{lemma-another-nand-free}), we
can make $y^+$ the first flip of both $\F_1$ and $\F_2$.

Now suppose our claim holds for $|\C_1\cap\C_2| = k$ and assume
$|\C_1\cap\C_2| = k + 1$. Let $y^+$ be the first positive flip
in $\F_1$ which is also in $\C_1\cap\C_2$. Note that in $\F_2$, $y^+$
might be preceded by other flips in $\C_1\cap\C_2$. However,
by our choice of $y^+$, all the flips in $\F_1$ which precede $y^+$ are
in $\C_1\setminus\C_2$. Hence, we can apply arguments similar to the case
$|\C_1\cap\C_2| = 1$ to make $y^+$ be the first flip of both $\F_1$ and $\F_2$.
Let $w = y^+(s)$, $\C'_1 = \C_1 \setminus \{y^+\}$, and $\C'_2 = \C_2 \setminus \{y^+\}$.
Given that $|\C'_1\cap\C'_2| = k$, by our induction hypothesis, we know that
$\F^*_1 \setminus \{y^+\}$ is valid at $w$. Since $y^+$ is valid
at $s$, $y^+ \cdot \F^*_1 \setminus \{y^+\} = \F^*_1$
is also valid at $s$. To show that $\F^*_2$ is also valid at $s$ we simply
need to modify the previous argument by considering the first positive flip
in $\F_2$ (instead of $\F_1$) which is also in $\C_1\cap\C_2$.
\qed
\end{proof}
}%%%%%%%%%%%%%%%%%%%%%%%%%%%%%%%%%%%%%

The above lemma, combined with Lemma~\ref{lemma-union}, shows that the set of
valid flip sets starting at $s$ forms a distributive lattice~\cite{birkhoff1937}.
Using Birkhoff's representation theorem~\cite{birkhoff1937} on it
directly implies the next lemma. However, for clarity, we also
provide an independent proof. Let $\prec$ be a partial order defined on a set $\C$ of flips.
We say a set $\C' \subseteq \C$ is {\em downward closed} if for every $x,y \in \C$,
$y \in \C' \wedge x \prec y \implies x \in \C'$.
We say that an ordering $\F$ of a subset of elements in $\C$ \emph{obeys} the partial
order $\prec$ if (i) $\C(\F)$ is downward closed and (ii) for every $x,y \in \F$,
$x \prec y \implies x <_\F y$.

\begin{lemma}
\label{lemma-partial-order}
Let $R$ be a NAND-free and dual-Horn-free relation and $s$ be an element of $R$.
Let $\mathscr{P} = \{x^+~|~x^+\in\C$ for a positive valid flip set $\C$ at $s\}$. Then there exists a
partial order $\prec$ on $\Pc$ such that any positive flip sequence $\F$ consisting
of a subset of $\mathscr{P}$ is a valid flip sequence at
$s$ if and only if it obeys the partial order $\prec$.
\end{lemma}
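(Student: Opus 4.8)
I would realize $\prec$ explicitly from the distributive-lattice structure of the positive valid flip sets at $s$. By Lemma~\ref{lemma-union} and Lemma~\ref{lemma:intersection}, the family $\mathcal{L}$ of positive valid flip sets at $s$, ordered by inclusion, is closed under union and intersection, and it is finite, so for every $x^+\in\mathscr{P}$ the subfamily of members of $\mathcal{L}$ containing $x^+$ is nonempty and closed under intersection, hence has a least element $I(x^+)$. (Birkhoff's representation theorem applied to $\mathcal{L}$ already yields the lemma; the construction below is the concrete version.) I would then define, for distinct $x^+,y^+\in\mathscr{P}$, that $x^+\prec y^+$ iff $x^+\in I(y^+)$. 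Transitivity is immediate once one notes that $x^+\in I(y^+)$ forces $I(x^+)\subseteq I(y^+)$ (by minimality of $I(x^+)$, since $I(y^+)$ is a valid set containing $x^+$), so $x^+\prec y^+\prec z^+$ gives $x^+\in I(y^+)\subseteq I(z^+)$. For irreflexivity and antisymmetry I would record the small fact that distinct $x^+,y^+$ cannot have both $x^+\in I(y^+)$ and $y^+\in I(x^+)$: otherwise $I(x^+)=I(y^+)=:C$, and the prefix of any valid ordering of $C$ up to whichever of $x^+,y^+$ comes first is a valid flip set strictly inside $C$ still containing that flip, contradicting minimality.

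Next I would show that the valid flip sets at $s$ are exactly the downward-closed subsets of $(\mathscr{P},\prec)$. If $\C\in\mathcal{L}$ then $\C\subseteq\mathscr{P}$ by definition of $\mathscr{P}$, and $\C$ is downward closed since $x^+\prec y^+\in\C$ makes $\C$ a valid set containing $y^+$, so $I(y^+)\subseteq\C$ and hence $x^+\in\C$. Conversely, if $\C\subseteq\mathscr{P}$ is downward closed, then $\C=\bigcup_{x^+\in\C}I(x^+)$: the inclusion $\supseteq$ is trivial, and for $\subseteq$ note that $z^+\in I(x^+)$ means $z^+=x^+$ or $z^+\prec x^+$, so downward-closure of $\C$ places $z^+\in\C$; this is a finite union of members of $\mathcal{L}$, hence in $\mathcal{L}$ by Lemma~\ref{lemma-union}.

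It remains to prove that a positive flip sequence $\F$ with $\C(\F)\subseteq\mathscr{P}$ is valid at $s$ iff it obeys $\prec$. For the forward direction, $\C(\F)\in\mathcal{L}$ is downward closed by the previous paragraph, and if $x^+\prec y^+$ were both in $\F$ with $y^+<_\F x^+$, then the prefix of $\F$ up to and including $y^+$ would be a valid flip set containing $y^+$ but not $x^+$, contradicting $x^+\in I(y^+)$. For the converse I would induct on $|\C(\F)|$: let $g$ be the last flip of $\F$ and $\F'$ the remaining prefix. Since $\F$ obeys $\prec$, so does $\F'$ (deleting the last flip preserves downward-closure, and a prefix of a $\prec$-consistent sequence is $\prec$-consistent), so by induction $\F'$ is valid at $s$; moreover $\C(\F')$ and $\C(\F)$ are both downward closed, hence both in $\mathcal{L}$. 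The key step is to invoke the byproduct of the proof of Lemma~\ref{lemma-union}: for $\C_1\subseteq\C_2$ in $\mathcal{L}$ and any valid ordering $\F_1$ of $\C_1$, the set $\C_2\setminus\C_1$ is a valid flip set at $\F_1(s)$ (there the canonical form of $\F_1^{-1}\cdot\F_2$ has positive part consisting precisely of the flips of $\C_2\setminus\C_1$ and is valid at $\F_1(s)$). Applying this with $\C_1=\C(\F')$, $\C_2=\C(\F)$, $\F_1=\F'$ makes $g$ valid at $\F'(s)$, so $\F=\F'\cdot g$ is valid at $s$.

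I expect the converse of this last equivalence to be the main obstacle: not the definition of $\prec$ nor the down-set description of $\mathcal{L}$, but the claim that every $\prec$-respecting arrangement of a valid flip set is itself valid. This is exactly where the closure properties of $\mathcal{L}$ — channelled through the ``extend a given valid ordering'' consequence of Lemma~\ref{lemma-union} — carry the argument; the rest is bookkeeping about the least elements $I(x^+)$.
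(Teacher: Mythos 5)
Your proposal is correct and follows essentially the same route as the paper: both turn the union/intersection closure of Lemmas~\ref{lemma-union} and~\ref{lemma:intersection} (the distributive-lattice structure, i.e.\ a concrete Birkhoff argument) into an explicit partial order and prove the ``obeys $\prec$ implies valid'' direction by induction on the length of the sequence, and your relation $x^+\prec y^+\Leftrightarrow x^+\in I(y^+)$ coincides with the paper's definition that $x^+$ occurs, and occurs before $y^+$, in every valid sequence containing $y^+$. If anything you are more explicit than the paper at the final step, where you justify via the byproduct of the proof of Lemma~\ref{lemma-union} (that $\C_2\setminus\C_1$ is valid at $\F_1(s)$ for any valid ordering $\F_1$ of $\C_1$) why appending the single remaining flip to the valid prefix stays valid, a point the paper asserts without further argument.
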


\begin{proof}
Our proof proceeds by providing an explicit partial order $\prec$ on the flips in $\Pc$.
For $x^+, y^+\in\Pc$, let $x^+\prec y^+$ if and only if all valid positive flip
sequences $\F$ starting at $s$ that contain $y^+$ also contain $x^+$ and $x^+<_\F y^+$.
This is clearly a partial order since if $x^+\prec y^+$ and $y^+\prec z^+$ then $x^+\prec z^+$.

From the definition of the partial order, it is clear that every valid flip set
must satisfy the partial order. For the other direction, consider a flip
sequence $\F^*$ that satisfies the partial order. We will show that $\F^*$ is
valid by induction on the length of the flip sequence.

For the base case, $\F^*$ is trivially valid when $|\F^*| = 0$. As the induction hypothesis, suppose
that any flip sequence of length $i-1$ that satisfies the partial order is valid.
Consider the flip sequence $\F^*=(f_1,\dots ,f_i)$ that satisfies the partial order, and
let $\F_{i-1} =(f_1, \dots, f_{i-1})$.
Let $\X$ be the set of all positive flip sequences valid at $s$ whose last
element is $f_i$. Consider the set $\C=\bigcap_{\F\in\X}\C(\F)$. Since $\F^*$
satisfies the partial order, $\C\subseteq\C(\F^*)$. To see why, suppose that
$\C$ has an element $x^+$ that is not there in $\C(\F^*)$. That would mean
that $x^+$ appears before $f_i$ in all valid sequences starting at $s$. But
then $x^+ \prec f_i$ and the sequence $\F^*$ does not obey the partial order.
Thus using Lemma~\ref{lemma:intersection}, we know that $\C$ is a valid flip
set. Since $\C(\F_{i-1})$ is also a valid flip set (from the induction hypothesis), from
Lemma~\ref{lemma-union} we know that
$\C\cup\C(\F_{i-1}) = \C(\F_{i-1})\cup\{f_i\}= \C(\F^*)$ (since $\C\subseteq\C(\F^*)$) is
a valid flip set. Since $\C(\F_{i-1})$ and $\C(\F^*)$ are both valid
flip sets and $\C(\F^*) \setminus \C(\F_{i-1}) = f_i$, $\F^*$ must be a valid flip sequence.
\qed
\end{proof}

\ignore{
Suppose the set $S= \{z^+x^+y^+$, $z^+y^+x^+$, $x^+z^+y^+$, $x^+y^+z^+\}$
consisted of all valid permutations of the set $\C=\{x^+, y^+, z^+\}$
starting from $s \in R$, for some relation $R$. Clearly $z^+$ is incomparable to
$x^+$ and $y^+$; less clear is the ordering of $x^+$ and $y^+$ as it
seems to depend on the position of $z^+$. We cannot have $x^+\prec y^+$
or $y^+\prec x^+$ since both $z^+x^+y^+$ and $z^+y^+x^+$ are valid
permutations. However, if we assume that $x^+$ and $y^+$ are
incomparable, then $x^+y^+z^+$ should also have been a valid
permutation. This means there does not exist a partial order that
characterizes $S$. By the definition of $S$, $x^+y^+$ and $z^+y^+$ are
both valid permutations and $y^+x^+$ is not, which is
exactly the kind of situation that is
avoided by relations that are NAND-free and dual-Horn-free.
}

\subsubsection{Efficiently computing the shortest reconfiguration path}
We are now ready to provide a polynomial-time algorithm for
finding shortest reconfiguration paths in CNF($\mathcal{S}$) formulas where $\mathcal{S}$ is navigable.
If every relation in $\mathcal{S}$ is component-wise bijunctive, we use Gopalan et al.'s algorithm.
Otherwise, as discussed before, we assume that every relation in $\mathcal{S}$ is NAND-free and dual-Horn-free.

Let $\phi$ be a CNF($\mathcal{S}$) formula where every relation in
$\mathcal{S}$ is NAND-free and dual-Horn-free, $\{x_1,\ldots , x_n\}$
be the set of variables, and $\{C_1,\ldots , C_m\}$ be the set of
clauses in $\phi$. We wish to compute the shortest reconfiguration path between $s$ and $t$ in
$G_\phi$ for $s,t\in R_\phi$.
Let $\mathscr{P}_s$ and $\mathscr{P}_t$ be the sets of positive flips that
occur in any positive flip set valid at $s$ and $t$, respectively.

The following lemma shows that the property of any valid flip
sequence for a NAND-free and dual-Horn-free relation being
describable by a partial order, as proved in
Lemma~\ref{lemma-partial-order}, also applies to CNF($\mathcal{S}$) formulas
where every relation in $\mathcal{S}$ is NAND-free and dual-Horn-free.

\begin{lemma}
\label{lemma-formula-po}
Let $\phi$ be a CNF($\mathcal{S}$) formula where every relation
in $\mathcal{S}$ is NAND-free and dual-Horn-free.
For any $s,t\in R_\phi$, there exists a partial order $\prec_s$ on
$\mathscr{P}_s$ and a partial order $\prec_t$ on
$\mathscr{P}_t$ such that any positive flip sequence $\F_s$ consisting of a
subset of $\mathscr{P}_s$ is a valid flip sequence at $s$ if and only if
it obeys the partial order $\prec_s$
and any positive flip sequence $\F_t$ consisting of a
subset of $\mathscr{P}_t$ is a valid flip sequence at $t$ if and only if
it obeys the partial order $\prec_t$.
Moreover, $\mathscr{P}_s$, $\prec_s$, $\mathscr{P}_t$, and $\prec_t$ can be computed in polynomial time.
\end{lemma}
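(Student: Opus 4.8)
The plan is to reduce the statement to Lemma~\ref{lemma-partial-order} by showing that $R_\phi$, regarded as an $n$-ary relation over $\{x_1,\dots,x_n\}$, is itself NAND-free and dual-Horn-free, and then to supply polynomial-time procedures for the objects it produces. For the first point I would argue by localization. Any restriction $R'$ of $R_\phi$ obtained via a map $X$ satisfies $R' = \bigcap_{i=1}^m R_i'$, where $R_i' = \{r : f_{X_i}(f_X(r)) \in R_i\}$; since the composition $f_{X_i}\circ f_X$ is again a substitution/identification map, each $R_i'$ is a restriction of $R_i$. If $R'$ equalled the NAND relation $\{00,01,10\}$, then every $R_i'\supseteq R'$ would be $\{00,01,10\}$ or $\{0,1\}^2$; not all of them can be $\{0,1\}^2$, so some $R_i'$ is exactly the NAND relation, contradicting that $R_i\in\mathcal{S}$ is NAND-free. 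The same argument with the ternary relation $\{0,1\}^3\setminus\{100\}$ shows $R_\phi$ is dual-Horn-free. Consequently Lemmas~\ref{lemma-union}, \ref{lemma:intersection}, and~\ref{lemma-partial-order}, all stated for an abstract NAND-free (and dual-Horn-free) relation, apply to $R_\phi$; instantiating Lemma~\ref{lemma-partial-order} at $s$ and at $t$ yields the partial orders $\prec_s$ on $\mathscr{P}_s$ and $\prec_t$ on $\mathscr{P}_t$ with the required characterization of valid positive flip sequences.

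It then remains to compute $\mathscr{P}_s$ and $\prec_s$ (and, symmetrically, $\mathscr{P}_t$ and $\prec_t$) in polynomial time, despite $R_\phi$ potentially having exponentially many elements. By Lemma~\ref{lemma-union} the valid positive flip sets at $s$ are closed under union, so there is a unique maximum such set, which moreover equals $\mathscr{P}_s$. I would compute it by greedy closure: starting from $s$, repeatedly flip some currently-$0$ variable to $1$ whenever doing so keeps every clause of $\phi$ satisfied, stopping when no such variable exists. To see that this produces the maximum valid set rather than merely a maximal one, suppose greedy halted at a valid set $\C'$ properly contained in $\mathscr{P}_s$, and let $\C'(s)$ denote the state reached from $s$ by $\C'$; applying Lemma~\ref{lemma-canonical} to any flip sequence from $\C'(s)$ to the state reached by $\mathscr{P}_s$ yields a purely positive valid sequence at $\C'(s)$ whose flip set is $\mathscr{P}_s\setminus\C'$, so its first flip is a single valid flip outside $\C'$ --- contradicting that greedy halted. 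A run performs $O(n)$ rounds, each scanning $O(n)$ candidate flips against $O(m)$ clauses, hence runs in polynomial time.

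For $\prec_s$ the key observation is a combinatorial description of the order: for $x^+,y^+\in\mathscr{P}_s$, one has $x^+\prec_s y^+$ if and only if no valid positive flip set at $s$ contains $y^+$ while avoiding $x^+$. Indeed, any valid positive flip sequence containing $y^+$ in which $x^+$ is absent or appears after $y^+$ has a prefix (up to and including $y^+$) whose flip set witnesses the right-hand side, and conversely such a set immediately witnesses $x^+\not\prec_s y^+$. Since valid positive flip sets avoiding a fixed variable $x$ are again closed under union (Lemma~\ref{lemma-union}), there is a unique maximum one, $\C^{-x}_{\max}(s)$, computable by the same greedy closure with $x^+$ forbidden, and then $x^+\prec_s y^+$ iff $y^+\notin\C^{-x}_{\max}(s)$. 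Running this once for each of the at most $n$ variables $x$ with $x^+\in\mathscr{P}_s$ determines all of $\prec_s$ in polynomial time; transitivity of the resulting relation is immediate, and by construction it coincides with the partial order of Lemma~\ref{lemma-partial-order}. The computation for $t$ is identical with the role of the two assignments exchanged.

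The main obstacle is the polynomial-time part: finding the right combinatorial handle on the partial order --- namely that $x^+\prec y^+$ exactly when $y$ cannot be turned to $1$ along any valid monotone path that never flips $x$ --- and verifying that the naive greedy closure genuinely computes the lattice-theoretic maximum valid flip set rather than just a maximal one, which is precisely where Lemmas~\ref{lemma-canonical} and~\ref{lemma-union} do the work. The localization step is comparatively routine once one observes that restrictions of $R_\phi$ decompose as intersections of restrictions of the relations in $\mathcal{S}$.
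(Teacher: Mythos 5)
Your proposal is correct, but it reaches the lemma by a genuinely different route than the paper. The paper works clause-locally: for each clause $C_j=(R_j,X_j)$ it invokes Lemma~\ref{lemma-partial-order} on the constant-size relation $R_j$ at the induced assignment $f_{X_j}(s)$, records the local orders as directed edges of a graph $G_s$, prunes every flip that lies on or is reachable from a directed cycle, and takes $\prec_s$ to be reachability in the resulting DAG; its correctness rests on the observation that a flip sequence is valid for $\phi$ iff it is valid for every clause, and its polynomiality on the fact that each relation in the fixed set $\mathcal{S}$ has constant size. You instead globalize: your localization argument (restrictions of $R_\phi$ decompose as intersections of restrictions of the $R_i$, and neither the NAND relation nor $\{0,1\}^3\setminus\{100\}$ can arise as such an intersection unless some factor equals it) shows that $R_\phi$ is itself NAND-free and dual-Horn-free, so Lemma~\ref{lemma-partial-order} applies directly to $R_\phi$ for existence; you then compute $\mathscr{P}_s$ as the unique maximum valid positive flip set by greedy saturation, whose maximality is correctly justified via Lemmas~\ref{lemma-canonical} and~\ref{lemma-union} (the canonical sequence from the greedy endpoint to the maximum is purely positive, so greedy could not have halted early), and you recover $\prec_s$ from $n$ further greedy runs with one flip forbidden, using the correct equivalence that $x^+\prec_s y^+$ iff no valid positive flip set contains $y^+$ while avoiding $x^+$ (your prefix argument establishes this against the explicit order in the proof of Lemma~\ref{lemma-partial-order}). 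Each approach buys something: yours makes explicit the transfer of the relation-level lemmas to $R_\phi$ (a step the paper uses only implicitly) and avoids the paper's rather terse claim that the clause-wise orders, combined through $G_s$, exactly characterize formula-level validity; the paper's construction yields an explicit DAG representation of $\prec_s$ with constant work per clause, whereas yours spends $O(n)$ greedy closures, still comfortably polynomial. The only cosmetic differences are that your characterization makes $\prec_s$ reflexive while the paper's explicit order is strict, and that the maximality argument for the forbidden-flip greedy should be noted to go through verbatim because the witnessing canonical sequence uses only flips from sets avoiding $x^+$; neither affects correctness.
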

\begin{proof}
We compute $\mathscr{P}_s$, $\prec_s$, $\mathscr{P}_t$, and $\prec_t$ using two
directed graphs $G_s$ and $G_t$ which we construct.

We define $\mathscr{P} = \{x^+~|~x^+\in\C$ for a positive valid flip set $\C$ at $s$ for some relation in $\mathcal{S}\}$
and let $G_s$ contain a node for each flip in $\mathscr{P}$. The assignment $s$
induces an assignment $f_{X_j}(s)$ on clause
$C_j = (R_j, X_j)$ and Lemma~\ref{lemma-partial-order} defines a partial
order $\prec^j_s$ that characterizes the valid positive
sequences in $R_j$ starting at $f_{X_j}(s)$. For all $p, q\in \{1,\ldots,k_j\}$ such
that $p^+\prec^j_s q^+$, if $X_j(p)\not\in\{c_0, c_1\}, X_j(q)\not\in\{c_0, c_1\}$ and $X_j(p)\neq X_j(q)$, we
add the directed edge $(x_{X_j(p)}^+, x_{X_j(q)}^+)$ to $G_s$. We do this for
each clause $C_j$ for $j\in \{1,\ldots,m\}$. This gives us $G_s$.
Let $G_t$ be a directed graph defined similarly for $t$.

Now, in these graphs, a flip corresponding to a vertex $f$ which lies on a cycle and the flip corresponding to any  
vertex reachable from $f$ by an outgoing directed path (starting from $f$) is never going
to be performed (as the flip does not satisfy the order relation on the edges). 
Hence we remove these vertices from $G_s$ and $G_t$ as follows.
First, any vertex that appears on a directed cycle is marked to be removed. Then, we iteratively
mark every vertex that has an incoming edge from a marked vertex. Once the set of marked vertices
stops changing, we remove all marked vertices. Note that $G_s$ and $G_t$ are now acyclic.

We claim that $\Pc_s = V(G_s), \Pc_t = V(G_t)$, the partial order $\prec_s$ is such that
$f_1\prec_s f_2$ if and only if there is a directed path from $f_1$ to $f_2$ in $G_s$ and
the partial order $\prec_t$ is such that $f_1\prec_t f_2$ if and only if there is a
directed path from $f_1$ to $f_2$ in $G_t$. It is clear from Lemma~\ref{lemma-partial-order}
that any vertex that was removed in the second phase cannot be a part of any valid flip
sequence at $s$. To see that $\prec_s$ is the required partial order, it is enough to
see that any flip sequence is valid for $\phi$ if and only if it is valid for each clause.

Computing the partial orders defined by Lemma~\ref{lemma-partial-order} can be accomplished
in constant time for each relation in $\mathcal{S}$.
Then, the construction and deletion phases for $G_s$ and $G_t$
can be accomplished in polynomial time as described above.
\qed
\end{proof}

For a set $\Pc$, a partial order $\prec$ on $\Pc$, and a subset $A\subseteq\Pc$, the
\emph{smallest lower set} of $A$ is
the smallest superset of $A$ that is downward closed.
Such a lower set can be constructed in polynomial time by starting
with $A$ and including any element $f'$ not in $A$ such that $f' \prec f$ for some $f\in A$.
It is clear that any valid flip set that contains $A$ must
also contain the smallest lower set of $A$.

Now the algorithm for finding the shortest reconfiguration path is clear. We start from
$s$ and let $S$ be the set of positive flips on the variables that are
set to $1$ in $t$ and to $0$ in $s$. Then we compute the smallest lower set $S'$
containing $S$ and perform the flips in $S'$ as prescribed by the
partial order $\prec_s$ (on $\Pc_s$) to reach $s' \in R_\phi$.
We perform a similar set of flips starting from $t$ to reach $t' \in R_\phi$.
If $s' = t'$, we are done. Otherwise, we recursively find the
shortest path between $s'$ and $t'$. The complete algorithm is described in Algorithm~\ref{alg-main}.

\begin{algorithm}
\caption{\textsc{ShortestPath}($s$,$t$)}
\label{alg-main}
\begin{algorithmic}[1]
\REQUIRE A CNF($\mc{S}$) formula $\phi$ where all relations in $\mc{S}$
are NAND-free and dual-Horn-free; two satisfying assignments $s$ and $t$.
\ENSURE  Shortest reconfiguration path between $s$ and $t$.
\IF{($s = t$)}
    \STATE \RETURNR\ $\F_0$ \COMMENT{the empty flip sequence}
\ENDIF

\STATE Let $S$ be the set of positive flips that flip variables assigned 0 in $s$ and 1 in $t$.

\STATE Let $T$ be the set of positive flips that flip variables assigned 0 in $t$ and 1 in $s$.

\IF {$S$ contains an element not in $\Pc_s$ or if $T$ contains an element not in $\Pc_t$}
    \STATE \RETURNR\ Not connected.
\ENDIF

\STATE Compute the smallest lower set $S'$ of $S$ in $\Pc_s$ with respect to $\prec_s$.

\STATE Compute the smallest lower set $T'$ of $T$ in $\Pc_t$ with respect to $\prec_t$.

\STATE Let $\F_{s}$ and $\F_{t}$ be orderings of $S'$ and $T'$ that obey $\prec_s$ and $\prec_t$, respectively.

\STATE Let $s' = \F_{s}(s)$ and $t' = \F_{t}(t)$.

\STATE Let $\F = $ \textsc{ShortestPath}($s'$,$t'$).

\STATE \RETURNR\ $\F_s\cdot\F\cdot\F_t^{-1}$.
\end{algorithmic}
\end{algorithm}

We are now ready to prove the following theorem.
\begin{theorem}\label{thm:polytimealgo}
Let $\mc{S}$ be a navigable set of relations, $\phi$ be a CNF($\mc{S}$) formula, and $s$ and $t$ two of its satisfying assignments. We can compute the shortest reconfiguration path between $s$ and $t$ in polynomial time.
\end{theorem}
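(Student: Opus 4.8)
The plan is to prove correctness and a polynomial running-time bound for Algorithm~\ref{alg-main}; combined with Gopalan et al.'s algorithm for the component-wise bijunctive case and the symmetric treatment of OR-free/Horn-free relations, this yields the theorem. So throughout assume every relation in $\mc{S}$ is NAND-free and dual-Horn-free, and write $d(a,b)$ for the length of a shortest flip sequence from $a$ to $b$.

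\textbf{Structure of shortest paths.} First I would record the structural picture that makes the algorithm work. By Lemma~\ref{lemma-canonical} every valid flip sequence has an equivalent canonical one of no greater length, so there is a shortest $s\to t$ path that is canonical, say $s\to w\to t$ with all positive flips before all negative flips. Since positive flips only turn $0$'s into $1$'s and negative flips only $1$'s into $0$'s, the apex $w$ is coordinatewise above both $s$ and $t$, it is reachable from $s$ by a valid positive flip set, and $t$ is reachable from $w$ by a valid negative flip set (equivalently $w$ is reachable from $t$ by a valid positive flip set); such a path has length exactly $2\,\mathrm{wt}(w)-\mathrm{wt}(s)-\mathrm{wt}(t)$. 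Hence a shortest path corresponds to a minimum-weight common ancestor $w^*$ of this kind. By Lemmas~\ref{lemma-union} and~\ref{lemma:intersection} the valid positive flip sets at $s$ (and at $t$) form distributive lattices, and a short set-distributivity computation shows the flip sets from $s$ reaching such a common ancestor are closed under union and intersection; thus $w^*$ is well defined, and Algorithm~\ref{alg-main} is exactly a procedure that drives $s$ and $t$ up to $w^*$ by repeatedly performing the flips that are \emph{forced}.

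\textbf{Correctness of the recursion.} The heart of the argument is the identity
\[
d(s,t) \;=\; |S'| \;+\; d(s',t') \;+\; |T'| .
\]
For ``$\le$'', $\F_s$ is valid at $s$ (it obeys $\prec_s$, Lemma~\ref{lemma-formula-po}), $\F_t^{-1}$ is valid at $t'=\F_t(t)$ and takes $t'$ to $t$, so concatenating $\F_s$, any valid $s'\to t'$ sequence, and $\F_t^{-1}$ gives a valid $s\to t$ sequence of the stated length. For ``$\ge$'', take a canonical shortest $s\to t$ path with apex $w$, positive-flip set $P$, negative-flip set $N$. Since $w\ge t$ coordinatewise, $P\supseteq S$; and $P$, being a valid positive flip set at $s$, is downward closed under $\prec_s$ (Lemma~\ref{lemma-formula-po}), so $P\supseteq S'$. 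Because $S'$ is downward closed, reordering $P$ to list $S'$ first still obeys $\prec_s$ and is therefore still valid, so the path may be rewritten with prefix $\F_s$; symmetrically, using $w\ge s$ and that $N$ reversed is a valid positive flip set at $t$, it may be rewritten with suffix $\F_t^{-1}$. What remains in the middle is a valid $s'\to t'$ path of length $d(s,t)-|S'|-|T'|$. Induction on $d(s,t)$ then shows the algorithm returns a shortest path. One also checks the invariants $s'\le w^*$, $t'\le w^*$ (the forced flips never overshoot the least common ancestor, so it is preserved by the recursion) and that $s,t$ are connected iff $s',t'$ are; hence the ``Not connected'' test — triggered exactly when a forced positive flip lies outside $\Pc_s$ or $\Pc_t$, which by Lemma~\ref{lemma-canonical} rules out any path — is correct.

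\textbf{Termination and running time, and the main obstacle.} If $s\ne t$ they differ on a coordinate that is $0$ in one and $1$ in the other, so at least one of $S,T$ is nonempty, hence at least one of $S',T'$ is nonempty and at least one positive flip is performed; thus $\mathrm{wt}(s)+\mathrm{wt}(t)$ strictly increases with each recursive call. As this sum is at most $2n$, the recursion has depth at most $2n$, and each level builds $\Pc_s,\prec_s,\Pc_t,\prec_t$, the smallest lower sets, and obeying orderings in polynomial time (Lemma~\ref{lemma-formula-po}), so the total running time is polynomial. I expect the genuine difficulty to lie in the ``forcedness'' step of the ``$\ge$'' direction: showing that \emph{every} shortest canonical path must begin with $\F_s$ and end with $\F_t^{-1}$, which rests on combining Lemma~\ref{lemma-formula-po} (valid positive flip sequences are precisely the $\prec_s$-obeying ones, letting downward-closed sets be pulled to the front) with the lattice structure of Lemmas~\ref{lemma-union} and~\ref{lemma:intersection} to guarantee that $w^*$ is well defined and is invariant along the recursion; once that is in place, the base case, connectivity test, and polynomial bounds are routine bookkeeping.
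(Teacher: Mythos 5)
Your proposal is correct and follows essentially the same route as the paper's proof: a shortest canonical path must contain $S'$ (resp.\ $T'$) in its positive (resp.\ reversed negative) part, and by Lemma~\ref{lemma-formula-po} it can be reordered to start with $\F_s$ and end with $\F_t^{-1}$, giving the recursion identity and correctness by induction, with the same polynomial bound via the decreasing number of zeros per call. Your extra remarks on the unique least common ancestor are sound but not needed, and your termination accounting (decrease by at least one per call) is in fact slightly more careful than the paper's stated decrease of two.
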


\begin{proof}
We show that Algorithm~\ref{alg-main} finds the shortest path between $s$ and $t$, and runs in polynomial time.
For any Boolean vector $x$, let $\eta(x)$ denote the
number of 0's in $x$ and let $\eta = \eta(s)+\eta(t)$. It is clear that
Steps 1 to 10 take time polynomial in the input size $N$, where $N = |\phi| + |\mc{S}| + |s| + |t|$. Here $|x|$ denotes the number of bits needed to represent $x$.
Since $\F_s$ and $\F_t$ are both positive flip sequences, $\eta(s')+\eta(t')\leq\eta(s)+\eta(t)-2$.
Thus the running time $T(\eta)$ of the algorithm satisfies the recursive inequality $T(\eta) \leq T(\eta - 2) + P(N)$
where $P(N)$ is some polynomial in $N$. Since $\eta < N$ the recursion solves to a polynomial in $N$.

Finally, we prove the correctness of the algorithm.
We use induction on $\eta$. If $\eta = 0$, then $s=t$ and the algorithm is trivially correct.

If the algorithm returns ``Not connected'', then it is either because of
Step 6 or Step 11. If it is because of Step 11, then by the induction hypothesis
$s'$ and $t'$ are not connected, and thus $s$ and $t$ are also not connected.
Any flip sequence that transforms $s$ to $t$ must perform each flip in $S$. Thus it is also clear
that if Step 6 returns ``Not connected'', then $s$ and $t$ are not connected.

If the algorithm returns a flip sequence, then we claim that it is a shortest sequence.
From induction, we know that $\F$ is a shortest flip sequence from $s'$ to $t'$.
The claim follows from the observation that if $s$ and $t$ are connected, then there
must exist a shortest path from $s$ to $t$ that passes through both $s'$ and $t'$.
Let $\F_1\cdot\F_2^{-1}$ be a shortest flip sequence from $s$ to $t$ such that $\F_1$
and $\F_2$ are both positive. It is clear that $S'\subseteq\C(\F_1)$. Since $S'$ itself
is valid, from Lemma~\ref{lemma-formula-po}, there must exist a valid ordering
of $\C(\F_1)$ that first performs all flips of $S'$. In this ordering, the vertex
reached after performing all flips of $S'$ is exactly $s'$. Using a similar argument
on $\F_2$, we get a shortest path that goes through both $s'$ and $t'$.
\qed
\end{proof}

\section{Final remarks}
Many problems can be modelled as finding shortest paths in large graphs. Our result provides new insights into the kinds of structures a graph will need to possess to be amenable to an efficient shortest path algorithm. The fact that the shortest path in navigable formulas flips variables that are not in the symmetric difference is evidence that our algorithm exploits a property of the reconfiguration graph that is fundamentally new. Any previously known properties that were used to find shortest paths efficiently also rendered the graph too simple, in that any shortest path only flipped the symmetric difference. It will be interesting to see if our results help us understand other large graphs, in particular, the flip graph of triangulations of a convex polygon where the complexity of finding the shortest path is still open.

\bibliographystyle{abbrv}
\bibliography{satrefs}
\end{document}